
\documentclass[letterpaper, 10 pt, conference]{ieeeconf}  
\linespread{0.938}

\IEEEoverridecommandlockouts  
\usepackage[utf8]{inputenc}

\usepackage{xcolor}
\usepackage{graphicx}
\usepackage{graphics}
\usepackage{subcaption}
\usepackage{amsmath}
\usepackage[version=4]{mhchem}
\usepackage{siunitx}
\usepackage{longtable,tabularx}
\usepackage{algorithm}
\usepackage{algorithmicx}
\usepackage{algpseudocode}
\usepackage{amsmath}
\usepackage{amssymb}
\usepackage{proof}
\usepackage{epsfig}
\usepackage{calc}

\usepackage{amsthm}
\usepackage{accents}
\usepackage{cite}

\newtheorem{remark}{Remark}
\newtheorem{lemma}{Lemma}
\newtheorem{assumption}{Assumption}
\newtheorem{theorem}{Theorem}


\usepackage{xcolor}

\overrideIEEEmargins                      





\title{\LARGE \bf
Koopman Operator Based Modeling and Control of Rigid Body Motion Represented by Dual Quaternions
}

\author{Vrushabh Zinage \and Efstathios Bakolas
\thanks{This research has been supported  in part by NSF award CMMI-1937957.}
\thanks{Vrushabh Zinage (graduate student) and Efstathios Bakolas (Associate Professor) are with the Department of Aerospace Engineering and Engineering Mechanics,
The University of Texas at Austin, Austin, Texas 78712-1221, USA, {Emails:\tt\small vrushabh.zinage@utexas.edu; bakolas@austin.utexas.edu}}
}

\begin{document}

\bibliographystyle{IEEEtran} 

\maketitle
\thispagestyle{empty}
\pagestyle{empty}

\begin{abstract}

In this paper, we systematically derive a finite set of Koopman based observables to construct a lifted linear state space model that describes the rigid body dynamics based on the dual quaternion representation. In general, the Koopman operator is a linear infinite dimensional operator, which means that the derived linear state space model of the rigid body dynamics will be infinite-dimensional, which is not suitable for modeling and control design purposes. Recently, finite approximations of the operator computed by means of methods like the Extended Dynamic Mode Decomposition (EDMD) have shown promising results for different classes of problems. However, without using an appropriate set of observables in the EDMD approach, there can be no guarantees that the computed approximation of the nonlinear dynamics is sufficiently accurate. The major challenge in using the Koopman operator for constructing a linear state space model is the choice of observables. State-of-the-art methods in the field compute the approximations of the observables by using neural networks, standard radial basis functions (RBFs), polynomials or heuristic approximations of these functions. However, these observables might not provide a sufficiently accurate approximation or representation of the dynamics. In contrast, we first show the pointwise convergence of the derived observable functions to zero, thereby allowing us to choose a finite set of these observables. Next, we use the derived observables in EDMD to compute the lifted linear state and input matrices for the rigid body dynamics. Finally, we show that an LQR type (linear) controller, which is designed based on the truncated linear state space model, can steer the rigid body to a desired state while its performance is commensurate with that of a nonlinear controller. The efficacy of our approach is demonstrated through numerical simulations.
\end{abstract}

\section{Introduction\label{sec:introduction}}
We consider the problem of modeling and control of the dual quaternion based representation of rigid body motion using the Koopman operator framework. In particular, we propose a systematic way to describe the rigid body dynamics in terms of a linear system which is defined over a \textit{lifted} statee space spanned by the so-called Koopman based observables. The main advantage of utilizing the Koopman operator is that it explicitly accounts for nonlinearities in the dynamics unlike methods which rely on (local) linearization of the (nonlinear) dynamics about a point. The price that one has to pay when using the Koopman operator framework is that the lifted state space is in general infinite-dimensional and thus any meaningful finite-dimensional approximation (truncation) of the lifted state space will have higher dimension than the original nonlinear system model. The states of the lifted (linear) model are (nonlinear) functions of the states of the original (nonlinear) system which are known as observables or basis functions. Finding these observables is in general a complex task given that they do not exist systematic methods for their characterization for general nonlinear systems. In this paper, we derive in a systematic way a set of observables for the rigid body motion described in terms of dual quaternions and we subsequently propose simple linear control design techniques based on the lifted linear system associated with the latter observable. It turns out that these linear controllers perform similarly with a benchmark nonlinear controller for this particular system. 

\textit{Literature review:} The motion of a rigid body (both its position and attitude) can be represented in a compact and efficient manner through dual quaternions. This representation, takes automatically into account the natural coupling between the rotation and translation of a rigid body which thereby allows us to design a single controller which can control both the attitude and position of the rigid body simultaneously. Dual quaternions have been successfully applied to rigid body control \cite{pham2010position_rigid_body_1,perez2002dual_rigid_body_2,perez2002dual_rigid_body_3,wang2010feedback_rigid_body_4,han2008control_rigid_body_5,schlanbusch2012stability_dual_quaternion_automatica_2}, manipulator robots \cite{da2021robust_dual_quaternions_automatica}, inverse kinematic analysis, spacecraft formation flying \cite{wang20126_spacecraft_formation_1,gan2008dual_spacecraft_formation_2}, and computer vision.

In recent years, Koopman operator has drawn increasing attention among the controls and robotics community 
\cite{susuki2011nonlinear_koopman_6,surana2018_model_reduction_koopman_koopman_4,budivsic2012applied_koopman_theory_1}. Identification of Koopman-invariant subspaces using neural networks has been explored in \cite{lusch2018deep_koopman_neural_1,otto2019linearly_koopman_neural_2,yeung2019learning_koopman_neural_4} and using data-driven approaches in \cite{haseli2020fast,haseli2021learning,haseli2021parallel}. Extensions of these results / methods to controlled systems have also been explored for robotic applications \cite{abraham2019active_koopman_theory_tro,bruder2020data_soft_robots_koopman,mamakoukas2021derivative_koopman_robotics,mamakoukas2019local_koopman_robotics}, control synthesis \cite{folkestad2020data_koopman_control_synthesis_2,goswami2021bilinearization_koopman_control_synthesis_3,huang2018feedback_koopman}, aerospace applications \cite{chen2020koopman_attitude,zinage2021far_rendezvous_koopman}, power systems \cite{susuki2013nonlinear_power_systems_koopman_1,susuki2012nonlinear_power_systems_koopman_2}, control of PDEs \cite{peitz2019koopman_control_pdes} and climate research \cite{navarra2021estimation_climate_koopman}. 

The major challenge in using the Koopman operator based techniques for control and modeling of nonlinear dynamical systems is the characterization of the observable functions. State-of-the-art methods in the field, rely on heuristics or they try to learn these functions by using machine learning tools \cite{abraham2019active_koopman_theory_tro,korda2018linear_koopman_theory_automatica}.
The main advantage of using Koopman operator based techniques for modeling and control of dual quaternion based rigid body motion is two-fold. First, unlike linear models obtained through linearization about a fixed point whose accuracy is restricted in the vicinity of the latter point, the lifted linear model obtained by applying Koopman operator techniques provides an accurate description of the dynamics of the original system throughout a large subset of (if not the whole) the state space of the original system. Second, the availability of versatile and robust tools for analysis and control from the theory of linear systems make the analysis and control of the rigid body motion much easier.

\textit{Main contributions:} In this paper, we provide a systematic method to derive and construct the observable functions for the rigid body dynamics based on the dual quaternion representation. We show that these observables are functions of the dual quaternions which can form a sequence of functions. We prove that the latter sequence converges pointwise to the zero function. This result essentially allow us to truncate the proposed sequence of observables and obtain a finite-dimensional linear approximation of the rigid body dynamics which is sufficiently accurate for modeling and control design purposes. Further, we use these observables to design a data-driven Koopman based LQR controller for setpoint tracking. Through numerical simulations, we compare the efficacy of the proposed linear controller with a nonlinear controller \cite{nuno_filipe2013simultaneous_2} and show that our controller shows equivalent performance and is able to steer the rigid body to the desired state. To the best of the authors' knowledge, this is the first paper which proposes a Koopman operator framework for modeling and control of rigid body dynamics based on the dual quaternion representation.

{\textit{Structure of the paper:}} The organization of the paper is as follows. In Section \ref{sec:preliminaries}, we introduce the quaternion and dual quaternion algebra followed by the nonlinear rigid body dynamics represented in terms of dual quaternions. Subsequently, we provide an overview of the Koopman operator. In Section \ref{sec:observale_functions}, we provide 
the derivation of the set of Koopman based observables and then construct a lifted linear state space model based on these observables using EDMD. In Section \ref{sec:control_design}, we design a data-driven Koopman based LQR controller for the rigid body dynamics. Numerical simulations are presented in Section \ref{sec:numeical_simulations}, and finally Section \ref{sec:conclusions} presents concluding remarks.

\section{Preliminaries}\label{sec:preliminaries}
\subsection{Quaternion Algebra}
A quaternion $q$ can be represented by a pair $(\Bar{q},q_4)$ where $\Bar{q}\in\mathbb{R}^3$ is known as its vector part and $q_4\in\mathbb{R}$ as its scalar part. The set of quaternions is denoted by $\mathbb{Q}$. Some of the basic operations on quaternions are given below \cite{nuno_filipe2013rigid_nuno_rigid,nuno_filipe2013simultaneous_2}:
\begin{align}
&\text {Addition: } a+b=\left(\Bar{a}+\Bar{b}, a_{4}+b_{4}\right),\nonumber\\
&\text {Multiplication by a scalar: } \lambda a=\left(\lambda \Bar{a}, \lambda a_{4}\right),\nonumber\\
&\text {Multiplication: } a b=\left(a_{4} \Bar{b}+b_{4} \Bar{a}+\Bar{a} \times \Bar{b}, a_{4} b_{4}-\Bar{a} \cdot \Bar{b}\right),\nonumber \\
&\text {Conjugation: } a^{\star}=\left(-\Bar{a}, a_{4}\right),\nonumber \\
&\text {Dot product: } a \cdot b=\frac{1}{2}\left(a^{\star} b+b^{\star} a\right)=\frac{1}{2}\left(a b^{\star}+b a^{\star}\right), \nonumber\\
&\quad\quad\quad\quad\quad\quad\quad\;\;=\left(\Bar{0}, a_{4} b_{4}+\Bar{a} \cdot \Bar{b}\right),\nonumber \\
&\text {Cross product: } a \times b=\frac{1}{2}\left(a b-b^{\star} a^{\star}\right), \nonumber\\
&\quad\quad\quad\quad\quad\quad\quad\quad\quad=\left(b_{4} \Bar{a}+a_{4} \Bar{b}+\Bar{a} \times \Bar{b}, 0\right), \nonumber\\
&\text {Norm: }\|a\|^{2}=a a^{\star}=a^{\star} a=a \cdot a=\left(\Bar{{0}}, a_{4}^{2}+\Bar{a} \cdot \Bar{a}\right),\nonumber 
\end{align}
where $a,b\in\mathbb{Q}$.
\subsection{Dual Quaternion Algebra}
A dual quaternion $\widehat{q}$ can be represented by a pair $({q}_r,{q}_d)$ where ${q}_r,{q}_d\in\mathbb{Q}$ and ${q}_r$ and ${q}_d$ are the real and dual parts of $\widehat{q}$, respectively. Let the set of dual quaternions be denoted by $\mathbb{D}$. The dual quaternion $\widehat{q}$ can also be represented as $\widehat{q}=q_r+\epsilon q_d$ where $\epsilon^2=0$ and $\epsilon\neq0$. A list of some basic operations on dual quaternions are given below \cite{nuno_filipe2013rigid_nuno_rigid,nuno_filipe2013simultaneous_2}:
\begin{align}
&\text {Addition: } \widehat{a}+\widehat{b}=\left(a_{r}+b_{r}\right)+\epsilon\left(a_{d}+b_{d}\right),\nonumber\\
&\text {Multiplication by a scalar: } \lambda \widehat{a}=\left(\lambda a_{r}\right)+\epsilon\left(\lambda a_{d}\right),\nonumber\\
&\text {Multiplication: } \widehat{a} \widehat{b}=\left(a_{r} b_{r}\right)+\epsilon\left(a_{r} b_{d}+a_{d} b_{r}\right),\nonumber \\
&\text {Conjugation: } \widehat{a}^{\star}=a_{r}^{\star}+\epsilon a_{d}^{\star},\nonumber \\
 &\text {Swap: } \widehat{a}^{\mathrm{s}}=a_{d}+\epsilon a_{r},\nonumber \\
&\text {Dot product: } \widehat{a} \cdot \widehat{b}=\frac{1}{2}\left(\widehat{a}^{\star} \widehat{b}+\widehat{b}^{\star} \widehat{a}\right)=\frac{1}{2}\left(\widehat{a} \widehat{b}^{\star}+\widehat{b} \widehat{a}^{\star}\right)\nonumber\\
&\quad\quad\quad\quad\quad\quad\quad\quad =a_{r} \cdot b_{r}+\epsilon\left(a_{d} \cdot b_{r}+a_{r} \cdot b_{d}\right),\nonumber\\
&\text {Cross product: } \widehat{a} \times \widehat{b}=\frac{1}{2}\left(\widehat{a} \widehat{b}-\widehat{b}^{\star} \widehat{a}^{\star}\right)\nonumber \\
&\quad\quad\quad\quad\quad\quad\quad\quad\quad=a_{r} \times b_{r}+\epsilon\left(a_{d} \times b_{r}+a_{r} \times b_{d}\right),\nonumber \\
&\text {Circle product: }\widehat{a}\circ\widehat{b}=a_r\cdot b_r+a_d\cdot b_d,  \nonumber\\
&\text {Norm: }\|\widehat{a}\|^{2}=a_r\cdot a_r+a_d\cdot a_d , \nonumber\\
&\text {Dual norm: }\|\widehat{a}\|_{d}^{2}=\widehat{a} \widehat{a}^{\star}=\widehat{a}^{\star} \widehat{a}=\widehat{a} \cdot \widehat{a} \nonumber\\
&\quad\quad\quad\quad\quad\quad\quad\;\;=\left(a_{r} \cdot a_{r}\right)+\epsilon\left(2 a_{r} \cdot a_{d}\right),  \nonumber\\
& \text{Multiplication by matrix}: M \star \widehat{q}=\left(M_{11} \star q_{r}+M_{12} \star q_{d}\right)\nonumber\\
&\quad\quad\quad\quad\quad\quad\quad\quad\quad\quad+\epsilon\left(M_{21} \star q_{r}+M_{22} \star q_{d}\right),\nonumber
\end{align}
with $\widehat{a},\widehat{b}\in\mathbb{D}$ and $M= \Big[\begin{smallmatrix} M_{11} & M_{12} \\
M_{21} & M_{22} \end{smallmatrix} \Big]\in\mathbb{R}^{8\times 8}$, where $M_{11}, M_{12}, M_{21}, M_{22} \in \mathbb{R}^{4 \times 4}$.
The following lemma will be used in the subsequent analysis
\begin{lemma}
\normalfont If $\widehat{a},\widehat{b},\widehat{c}\in\mathbb{D}$, then $(\widehat{a}\widehat{b})\widehat{c}=\widehat{a}(\widehat{b}\widehat{c})$.
\end{lemma}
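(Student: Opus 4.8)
The plan is to prove associativity of dual quaternion multiplication by reducing it to the associativity of ordinary quaternion multiplication, which is a standard (and presumably already available) fact. Writing $\widehat{a}=a_r+\epsilon a_d$, $\widehat{b}=b_r+\epsilon b_d$, $\widehat{c}=c_r+\epsilon c_d$ with $\epsilon^2=0$, I would simply expand both sides using the multiplication rule $\widehat{a}\widehat{b}=(a_rb_r)+\epsilon(a_rb_d+a_db_r)$ stated in the preliminaries, and verify that the real parts and dual parts agree term by term.

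Concretely, first I would compute $(\widehat{a}\widehat{b})\widehat{c}$: set $\widehat{a}\widehat{b}=(a_rb_r)+\epsilon(a_rb_d+a_db_r)$, then multiply on the right by $\widehat{c}$ to get real part $(a_rb_r)c_r$ and dual part $(a_rb_r)c_d+(a_rb_d+a_db_r)c_r$. Next I would compute $\widehat{a}(\widehat{b}\widehat{c})$ similarly: $\widehat{b}\widehat{c}=(b_rc_r)+\epsilon(b_rc_d+b_dc_r)$, so the product has real part $a_r(b_rc_r)$ and dual part $a_r(b_rc_d+b_dc_r)+a_d(b_rc_r)$. Then I would compare: the real parts match because $(a_rb_r)c_r=a_r(b_rc_r)$ by associativity of quaternion multiplication, and expanding the dual parts using distributivity of quaternion multiplication over addition, both equal $a_rb_rc_d+a_rb_dc_r+a_db_rc_r$ (again invoking quaternion associativity on each of the three terms). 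Hence the two dual quaternions are equal.

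There is essentially no obstacle here: the only facts needed are that $\mathbb{Q}$ is associative under multiplication and that quaternion multiplication distributes over addition, both of which are elementary and follow directly from the operation definitions listed above (or are standard). The proof is a short symbolic manipulation, so I would present it compactly in an \texttt{align*}-style display that tracks the real and dual components in parallel rather than belaboring each step in prose.
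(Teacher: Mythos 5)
Your proposal is correct and follows essentially the same route as the paper's proof: expand both products using the rule $\widehat{a}\widehat{b}=(a_rb_r)+\epsilon(a_rb_d+a_db_r)$, discard $\epsilon^2$ terms, and match the real and dual parts via associativity and distributivity of quaternion multiplication. Your version is in fact slightly cleaner, since you state explicitly which properties of $\mathbb{Q}$ are being invoked, whereas the paper leaves them implicit (and contains a small typo, writing $a_rb_rc_r$ where $a_rb_rc_d$ is meant in the dual part of $\widehat{a}(\widehat{b}\widehat{c})$).
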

\begin{proof}
Using the multiplication property of quaternions and dual quaternions, we have
\begin{align}
    (\widehat{a}\widehat{b})\widehat{c}&=[a_rb_r+\epsilon(a_rb_d+a_db_r)](c_r+\epsilon c_d)\nonumber\\
    &=a_rb_rc_r+\epsilon[a_rb_rc_d+(a_rb_d+a_db_r)c_r]\nonumber\\
    &=a_rb_rc_r+\epsilon[a_rb_rc_d+a_rb_dc_r+a_db_rc_r],\\
  \widehat{a}(\widehat{b}\widehat{c})&=(a_r+\epsilon a_d)[b_rc_r+\epsilon(b_rc_d+b_dc_r)]\nonumber\\
  &=a_rb_rc_r+\epsilon[a_rb_rc_r+a_rb_dc_r+a_db_rc_r].
\end{align}
Hence $(\widehat{a}\widehat{b})\widehat{c}=\widehat{a}(\widehat{b}\widehat{c})$.
\end{proof}
\subsection{Kinematics of rigid body in terms of dual quaternions}
The kinematics of the rigid body in terms of the dual quaternions can be written as follows:
\begin{align}
    \dot{\widehat{q}}=\frac{1}{2}\widehat{q}\widehat{\omega}^B=\frac{1}{2}\widehat{\omega}^E\widehat{q},
    \label{eqn:kinematics_dual}
\end{align}
where $\widehat{\omega}^B=\omega^B+\epsilon v^B$, $\widehat{\omega}^E=\omega^E+\epsilon v^E$, $\widehat{q}=q_r+\epsilon q_d=q +\epsilon \frac{1}{2}q{{t}}^B$, $q$ is the rotation quaternion, ${{t}}^B=(\Bar{t}^B,0)$, $\Bar{t}^B$ is the translation vector in the body frame and superscript $B$ and $E$ denotes the body and inertial frame respectively. Further, $\omega^B=(\Bar{\boldsymbol{\omega}},0)\in\mathbb{Q}$ and $v^B=(\Bar{\boldsymbol{v}},0)\in\mathbb{Q}$ where $\Bar{\boldsymbol{\omega}}\in\mathbb{R}^3$ and $\Bar{\boldsymbol{v}}\in\mathbb{R}^3$ are the angular and linear velocities of the rigid body in the body frame respectively.
\subsection{Dynamics of rigid body in terms of dual quaternions}
The rigid body dynamics in terms of dual quaternions can be written as follows \cite{nuno_filipe2013rigid_nuno_rigid}:
\begin{align}
    M^B\star(\dot{\widehat{\omega}}^B)^s=\widehat{F}^B-\widehat{\omega}^B\times(M^B\star(({\widehat{\omega}}^B)^s))
    \label{eqn:dynamical_original}
\end{align}
where $(\cdot)^s$ denotes the swap operation performed on $(\cdot)$, $M^B$ is the dual inertia matrix, $\widehat{F}^B=F^B+\epsilon\tau^B$ is the dual force applied to the center of mass of the body, $F^B=(\Bar{F}^B,0)$ and $\tau=(\Bar{\tau},0)$. Consider a modified control input $\widehat{u}=\widehat{F}^B-\widehat{\omega}^B\times(M^B\star(({\widehat{\omega}}^B)^s))$. Then \eqref{eqn:dynamical_original} becomes
\begin{align}
    \dot{\widehat{\omega}}^B=(M^{B(-1)}\star\widehat{u})^s,
    \label{eqn:modified_control_input}
\end{align}
where $M^B\in\mathbb{R}^{8\times 8}$ is a block matrix
\begin{align}
    M^{B}=\left[\begin{array}{cccc}
m I_{3} & 0_{3 \times 1} & 0_{3 \times 3} & 0_{3 \times 1} \\
0_{1 \times 3} & 1 & 0_{1 \times 3} & 0 \\
0_{3 \times 3} & 0_{3 \times 1} & \Bar{I}^{B} & 0_{3 \times 1} \\
0_{1 \times 3} & 0 & 0_{1 \times 3} & 1
\end{array}\right],
\end{align}
where $\Bar{I}^B\in\mathbb{R}^{3\times 3}$ is positive definite matrix and is the mass moment of inertia and $m$ is the mass of the body. 
\begin{remark}
\normalfont A dual quaternion $\widehat{q}$ can also be represented in a vector form as follows:
\begin{align}
    \widehat{q}=[\bar{q}_r\;\;q_{r4}\;\;\bar{q}_d\;\;q_{d4}]^\mathrm{T}.\nonumber
\end{align}
Further, in order to keep the notation simple, the superscript $B$ for body frame will be dropped.
\end{remark}

\subsection{Koopman Operator}
In this section, we briefly review the key concepts from the Koopman operator theory. To this end, 
consider a continuous-time nonlinear dynamical system given by:
\begin{align}
   \dot{\boldsymbol{x}}=f(\boldsymbol{x}) 
   \label{eqn:dynamical_system}
\end{align}
where $\boldsymbol{x}\in \mathbb{R}^n $ and the vector field $f$ is assumed to satisfy regularity assumptions that ensure existence and uniqueness of solutions. 
Let $\mathcal{O}$ be the set of observables $\psi:\mathbb{R}^n\rightarrow \mathbb{C}$ where $\psi$ is an element of an infinite-dimensional Hilbert space and $\mathbb{C}$ belongs to the set of complex numbers. The Koopman operator $\mathcal{K}_t:\mathcal{O}\rightarrow\mathcal{O}$ associated with system \eqref{eqn:dynamical_system} is defined as follows:
\begin{align}
    [\mathcal{K}_t\psi(\boldsymbol{x})]=\mathcal{K}_t(\psi(\boldsymbol{x})).
\end{align}
Although the underlying dynamics is, in general, nonlinear, the Koopman operator is a linear infinite dimensional operator which acts on the space of observables. In particular, the following holds true:
\begin{align}
    [\mathcal{K}_t(\alpha\psi_1(\boldsymbol{x})+\beta\psi_2(\boldsymbol{x}))]=\alpha[\mathcal{K}_t\psi_1](\boldsymbol{x})+\beta[\mathcal{K}_t\psi_2](\boldsymbol{x}).\nonumber
\end{align}
where $\psi_1$, $\psi_2\in\mathcal{O}$ and $\alpha$, $\beta\in\mathbb{C}$.
 A Koopman eigenfunction $\psi_\lambda(\boldsymbol{x})\in\mathcal{O}$ corresponding to an eigenvalue $\lambda\in\mathbb{C}$ is defined as follows:
\begin{align}
    [\mathcal{K}_t\psi_\lambda(\boldsymbol{x})]=\lambda\psi_\lambda(\boldsymbol{x}).\nonumber
\end{align}
In other words, $\psi_\lambda(\boldsymbol{x})$ satisfies the following differential equation:
\begin{align}
    \dot{\psi}_\lambda(\boldsymbol{x})=\lambda\psi_\lambda(\boldsymbol{x})\nonumber
\end{align}
For instance, $\psi_\lambda({x})=e^{\frac{\lambda}{(1-n) a} {x}^{1-n}}$ is the eigenfunction to the polynomial (scalar) system $\dot{{x}}=a{x}^n$ with corresponding eigenvalue $\lambda$. For a controlled system of the form
\begin{align}
    \dot{\boldsymbol{x}}=f(\boldsymbol{x})+B\boldsymbol{u},\nonumber
\end{align}
with input matrix $B\in\mathbb{R}^{n\times m}$ and control input $\boldsymbol{u}\in\mathbb{R}^m$, the dynamics of the Koopman eigenfunctions become
\begin{align}
    \dot{\psi_\lambda}(\boldsymbol{x})=\lambda \psi_\lambda(\boldsymbol{x})+\nabla \psi_\lambda(\boldsymbol{x})  {B \boldsymbol{u}}
\end{align}
The Koopman operator $\mathcal{K}_d$ for discrete nonlinear system $\boldsymbol{\boldsymbol{x}}_{k+1}=h(\boldsymbol{\boldsymbol{x}}_k)$ can be written in terms of $\mathcal{K}$ and the sampling time $T$ as $ \mathcal{K}=\mathrm{log}(\mathcal{K}_d)/T$. Consequently,
\begin{align}
    \psi(\boldsymbol{x}_{k+1})=\mathcal{K}_d\psi(\boldsymbol{x}_k).\nonumber
\end{align}
%
In general it is not possible to find the set of finite Koopman eigenfunctions for any nonlinear dynamics. 
{In practice, one has to use a finite subspace approximation of the Koopman operator $\bar{\mathcal{K}}_d\in\mathbb{R}^N\times\mathbb{R}^N$ which acts on a subspace $\mathcal{S}\subset\mathcal{O}$ (recall that the Koopman operator $\mathcal{K}_d$ is infinite dimensional, in general).} If the finite set of observables are given by $\boldsymbol{z}(\boldsymbol{x})=[\psi_1(\boldsymbol{x})\;\;\psi_2(\boldsymbol{x})\dots\psi_N(\boldsymbol{x})]^\mathrm{T}\in\mathbb{R}^N$, the following approximation holds true
\begin{align}
    \boldsymbol{z}(\boldsymbol{x}_{k+1})\approx\Bar{\mathcal{K}}_d\boldsymbol{z}(\boldsymbol{x}_k)\nonumber
\end{align}
Given the data $\mathcal{D}=\{\boldsymbol{x}_k\}_{k=0}^{d}$, $\bar{\mathcal{K}}_d$ can be computed by solving the following least squares minimization problem:
\begin{align}
    \text{min}_{\bar{\mathcal{K}}_d}~ \|  \boldsymbol{z}(\boldsymbol{x}_{k+1})-\bar{\mathcal{K}}_d\boldsymbol{z}(\boldsymbol{x}_k)\|_2^2.
\end{align}
Consider the discrete-time controlled system $\boldsymbol{x}_{k+1}=h(\boldsymbol{x}_k,\boldsymbol{u}_k)$. Then, the Koopman operator $\mathcal{K}_d$ over the extended state space $\mathbb{G}:\mathcal{X}\times\mathcal{U}$ and observable $g(\boldsymbol{x}_k,\boldsymbol{u}_k)=[\boldsymbol{z}\;\;v(\boldsymbol{x}_k,\boldsymbol{u}_k)]^\mathrm{T}$ can be defined as follows \cite{abraham2019active_koopman_theory_tro}:
\begin{align}
   g(\boldsymbol{x}_{k+1},\boldsymbol{u}_{k+1})=\mathcal{K}_dg(\boldsymbol{x}_k,\boldsymbol{u}_k). 
\end{align}

\begin{figure*}
\centering
\includegraphics[scale=0.68]{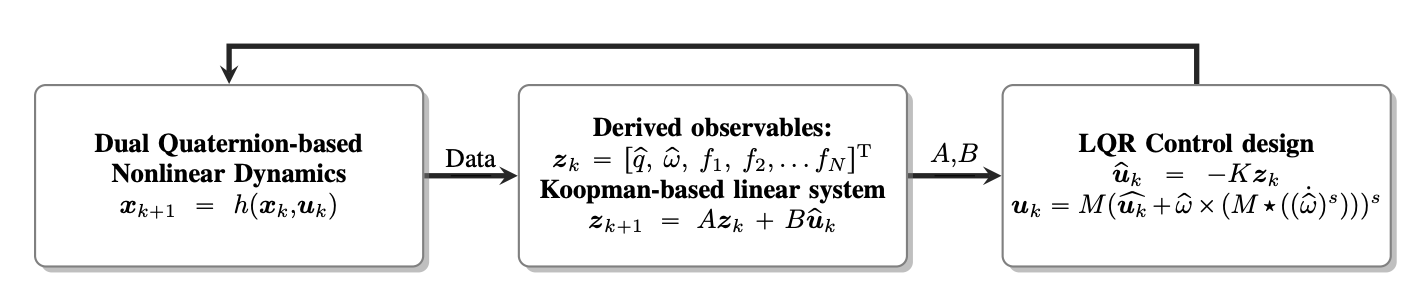}
 \caption{ Control of nonlinear rigid body motion based on dual quaternion representation using linear control design on the data-driven Koopman based lifted space linear dynamics.}
 \label{fig:feedback_diagram}
\end{figure*}
\section{Derivation of Koopman based observables\label{sec:observale_functions}}
In this section, we provide a systematic way to derive the observable functions for the continuous-time rigid body motion based on the dual quaternion representation. 

\begin{theorem}
\normalfont For the nonlinear system governed by \eqref{eqn:kinematics_dual} and \eqref{eqn:dynamical_original}, the lifted state space $\boldsymbol{z}$ is spanned by the following set of observable functions
\begin{align}
    \boldsymbol{z}=(\widehat{q},\;\;\widehat{\omega},\;\{\widehat{f}_k\}_0^\infty).
    \label{eqn:observables_list}
\end{align}
where $\widehat{f}_k=\widehat{q}{\widehat{\omega}^k}$.
\label{thm:observables}
\end{theorem}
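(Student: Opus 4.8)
The plan is to verify that the countable family listed in \eqref{eqn:observables_list} spans an invariant subspace of the Koopman operator for the system \eqref{eqn:kinematics_dual} together with the modified-control-input form \eqref{eqn:modified_control_input} of the dynamics \eqref{eqn:dynamical_original}; equivalently, that the infinitesimal generator (the Lie derivative along the drift vector field) sends each of $\widehat q$, $\widehat\omega$, and $\widehat f_k$ back into the span of the family, with the control entering only as a constant input term. Everything reduces to the kinematic identity $\dot{\widehat q}=\tfrac12\widehat q\widehat\omega$ from \eqref{eqn:kinematics_dual}, the reduced dynamics $\dot{\widehat\omega}=(M^{-1}\star\widehat u)^s$ from \eqref{eqn:modified_control_input}, and the associativity of dual-quaternion multiplication established above (Lemma~1), which is what makes the powers $\widehat\omega^k$ and the products $\widehat q\widehat\omega^k$ well defined.

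I would first dispatch the two base observables. Since $\widehat\omega^0$ is the dual-quaternion identity, $\widehat f_0=\widehat q\widehat\omega^0=\widehat q$, so the observable $\widehat q$ is already the element $\widehat f_0$ of the family, and by the kinematics $\dot{\widehat q}=\tfrac12\widehat q\widehat\omega=\tfrac12\widehat f_1$, which lies in the span. From the reduced dynamics, $\dot{\widehat\omega}=(M^{-1}\star\widehat u)^s$ is a constant-coefficient linear map of the modified input $\widehat u$, so it contributes nothing to the drift and only populates the input matrix; in particular, along the drift ($\widehat u\equiv 0$) the quantity $\widehat\omega$ is constant.

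The core step is the recursion for $\widehat f_k=\widehat q\widehat\omega^k$. Differentiating with the product rule and using Lemma~1 to re-bracket the factors,
\[
\dot{\widehat f}_k=\dot{\widehat q}\,\widehat\omega^k+\widehat q\,\frac{d}{dt}\big(\widehat\omega^k\big)=\tfrac12(\widehat q\widehat\omega)\widehat\omega^k+\widehat q\sum_{j=0}^{k-1}\widehat\omega^{\,j}\dot{\widehat\omega}\,\widehat\omega^{\,k-1-j}=\tfrac12\widehat f_{k+1}+\sum_{j=0}^{k-1}\widehat f_{j}\,\dot{\widehat\omega}\,\widehat\omega^{\,k-1-j}.
\]
Along the drift $\dot{\widehat\omega}=0$, so $\dot{\widehat f}_k=\tfrac12\widehat f_{k+1}$ exactly; hence the family spans a genuine Koopman-invariant subspace, and because each differentiation raises the index by one, closure forces the entire tail $\{\widehat f_k\}_0^\infty$ to be kept, which is precisely the asserted spanning set. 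When the control is active, the residual $\sum_{j}\widehat f_j\,\dot{\widehat\omega}\,\widehat\omega^{\,k-1-j}=\sum_j\widehat f_j\,(M^{-1}\star\widehat u)^s\,\widehat\omega^{\,k-1-j}$ is built from observables already in the list, powers of $\widehat\omega$, and $\widehat u$, and is absorbed into the (bilinear) input terms that EDMD subsequently fits; the later pointwise-convergence result is what then legitimises truncating the infinite family to a finite model.

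The main obstacle I anticipate is the non-commutative bookkeeping in $\tfrac{d}{dt}\widehat\omega^k$: since dual-quaternion multiplication does not commute, this derivative is a sum of $k$ ``sandwiched'' terms rather than $k\widehat\omega^{k-1}\dot{\widehat\omega}$, and one must apply Lemma~1 repeatedly to re-associate $\widehat q\,\widehat\omega^{\,j}\dot{\widehat\omega}\,\widehat\omega^{\,k-1-j}=(\widehat q\widehat\omega^{\,j})(\dot{\widehat\omega}\widehat\omega^{\,k-1-j})$ so that the identification $\widehat q\widehat\omega^{\,j}=\widehat f_j$ is legitimate and to check that no term leaves the span. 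One also has to state precisely in what sense the family ``spans the lifted state space'' — namely as a basis of an invariant subspace of the generator, with the control-dependent residuals handled at the EDMD stage rather than claimed to vanish identically.
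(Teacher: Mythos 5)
Your proposal is correct and follows essentially the same route as the paper: differentiate $\widehat f_k=\widehat q\widehat\omega^k$ with the product rule, substitute the kinematics $\dot{\widehat q}=\tfrac12\widehat q\widehat\omega$, and use the associativity of Lemma~1 to obtain $\dot{\widehat f}_k=\tfrac12\widehat f_{k+1}+(\text{input terms})$, so that closure forces the whole infinite chain $\{\widehat f_k\}$. If anything, your bookkeeping of $\tfrac{d}{dt}\widehat\omega^{k}$ as a sum of $k$ non-commuting sandwiched terms is more careful than the paper's intermediate displays (which abbreviate this derivative), but it leads to the same general recursion the paper states.
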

\begin{proof}
Let $\widehat{{\omega}}^{B(k)}$ be defined as follows:
\begin{align}
   \widehat{{\omega}}^{B(k)}=\underbrace{(\widehat{{\omega}}^B(\widehat{{\omega}}^B(\dots(\widehat{{\omega}}^B\widehat{{\omega}}^B)}_{k\;\text{times}}\dots))). \nonumber
\end{align}
Let $f_1=\widehat{q}\widehat{\omega}^B$. Then,
\begin{align}
    \dot{f}_1&=\dot{\widehat{q}}\widehat{\omega}^B+\widehat{q}\dot{\widehat{\omega}}^B=\frac{1}{2}(\widehat{q}\widehat{\omega}^B)\widehat{\omega}^B+\widehat{q}\dot{\widehat{\omega}}^B
\nonumber\\
&=\frac{1}{2}\widehat{q}(\widehat{\omega}^B\widehat{\omega}^B)+\widehat{q}\dot{\widehat{\omega}}^B=\frac{1}{2}\widehat{q}\widehat{\omega}^{B(2)}+\widehat{q}\dot{\widehat{\omega}}^B,\nonumber
\end{align}
where ${\widehat{\omega}^{B(2)}}=\widehat{\omega}^B\widehat{\omega}^B$. Now let $f_2=\widehat{q}\widehat{\omega}^{B(2)}$. Then,
\begin{align}
     \dot{f}_2&=\dot{\widehat{q}}{\widehat{\omega}^{B(2)}}+\widehat{q}{\dot{\widehat{\omega}}^{B(2)}}
    =\frac{1}{2}(\widehat{q}\widehat{\omega}^B){\widehat{\omega}^{B(2)}}+\widehat{q}\dot{\widehat{\omega}}^B
\nonumber\\
&=\frac{1}{2}\widehat{q}(\widehat{\omega}^B{\widehat{\omega}^{B(2)}})+\widehat{q}\dot{\widehat{\omega}}^B=\frac{1}{2}\widehat{q}\widehat{\omega}^{B(3)}+\widehat{q}\dot{\widehat{\omega}}^B,   
\end{align}
where ${\widehat{\omega}^{B(3)}}=\widehat{\omega}^B(\widehat{\omega}^B\widehat{\omega}^B)$. Therefore, in general
\begin{align}
  \dot{f}_k=\frac{1}{2}f_{k+1}+\sum_{i=1}^k {\widehat{\omega}}^{B {(i-1)}}({M^{-1}}\star\tilde{\boldsymbol{u}}) {\widehat{\omega}}^{B{(k-i)}}. 
\end{align}
As $N\rightarrow\infty$, we obtain countably infinite collection of observables given by \eqref{eqn:observables_list}.
\end{proof}
In the following we derive the general expression for $\widehat{\omega}^{B(k)}$ which will be used in subsequent analysis. 

\subsection{General expression for $\widehat{\omega}^{B(k)}$ }
Consider the expression for ${\widehat{\omega}_1^{B}}:=\widehat{\omega}^B\widehat{\omega}^B$,
\begin{align}
 \widehat{\omega}^B\widehat{\omega}^B&=(\omega^B\omega^B)+\epsilon(\omega^Bv^B+v^B\omega^B).
\end{align}
Now, $(\omega^B\omega^B), \omega^Bv^B$ and $v^B\omega^B$ can be written as
\begin{align}
    \omega^B\omega^B=&(\Bar{\boldsymbol{\omega}}^B\times\Bar{\boldsymbol{\omega}}^B,-\Bar{\boldsymbol{\omega}}^B.\Bar{\boldsymbol{\omega}}^B)\nonumber\\
    =&(0,-|\Bar{\boldsymbol{\omega}}^B|^2),\nonumber\\
    \omega^Bv^B=&(\Bar{\boldsymbol{\omega}}^B\times\Bar{\boldsymbol{v}}^B,-\Bar{\boldsymbol{\omega}}^B.\Bar{\boldsymbol{v}}^B),\nonumber\\
     v^B\omega^B=&(\Bar{\boldsymbol{v}}^B\times\Bar{\boldsymbol{\omega}}^B,-\Bar{\boldsymbol{\omega}}^B.\Bar{\boldsymbol{v}}^B).\nonumber
\end{align}
Therefore,
\begin{align}
  \widehat{\omega}^B\widehat{\omega}^B=(0,-|\Bar{\boldsymbol{\omega}}^B|^2)+\epsilon(\Bar{0},-2\Bar{\boldsymbol{\omega}}^B.\Bar{\boldsymbol{v}}^B).  
\end{align}
Now consider the expression of $\widehat{\omega}^B\widehat{\omega}^B_1=\widehat{\omega}^B\widehat{\omega}^B\widehat{\omega}^B$ can be written as follows:
\begin{align}
\widehat{\omega}^B\widehat{\omega}^B_1=\omega^B\omega_1^B+\epsilon(\omega_1^Bv^B+v_1^B\omega^B) . 
\label{eqn:omega_omega1}
\end{align}
Now, $\omega^B\omega_1^B,\omega_1^Bv^B$ and $v_1^B\omega^B$ in \eqref{eqn:omega_omega1} are given as \begin{align}
  & \omega^B\omega_1^B=(-|\Bar{\boldsymbol{\omega}}^{B(2)}|\Bar{\boldsymbol{\omega}}^{B},0),\quad\omega_1^Bv^B=(-|\Bar{\boldsymbol{\omega}}^{B(2)}|\Bar{\boldsymbol{v}}^{B},0),\nonumber\\
  &v_1^B\omega^B=(-2(\Bar{\boldsymbol{\omega}}^B.\Bar{\boldsymbol{v}}^B)\Bar{\boldsymbol{\omega}}^B,0),\nonumber
\end{align}
Therefore
\begin{align}
  \widehat{\omega}_2^{B}:= \widehat{\omega}^B\widehat{\omega}^B_1=&(-|\Bar{\boldsymbol{\omega}}^{B(2)}|\Bar{\boldsymbol{\omega}}^{B},0)+\epsilon(-|\Bar{\boldsymbol{\omega}}^{B(2)}|\Bar{\boldsymbol{v}}^{B}-\nonumber\\&2(\Bar{\boldsymbol{\omega}}^B.\Bar{\boldsymbol{v}}^B)\Bar{\boldsymbol{\omega}}^B,0)\nonumber  
\end{align}
Now, the expression of $\widehat{\omega}_3^{B}:=\widehat{\omega}^B\widehat{\omega}_2^B$ is given by
\begin{subequations}
\begin{align}
   &\widehat{\omega}^B\widehat{\omega}_2^B=(\omega^B\omega_2^B)+\epsilon(\omega_2^Bv^B+v_2^B\omega^B)\\
   &\omega^B\omega_2^B=(0,-|\Bar{\boldsymbol{\omega}}^B|^4)\\
   & \omega_2^Bv^B=(|\Bar{\boldsymbol{\omega}}^B|^2(\Bar{\boldsymbol{\omega}}^B\times\Bar{\boldsymbol{v}}^B),-|\Bar{\boldsymbol{\omega}}^B|^2((\Bar{\boldsymbol{\omega}}^B.\Bar{\boldsymbol{v}}^B)))\\
   & v_2^B\omega^B=(-|\Bar{\boldsymbol{\omega}}^B|^2(\Bar{\boldsymbol{\omega}}^B\times\Bar{\boldsymbol{v}}^B),-|\Bar{\boldsymbol{\omega}}^B|^2\Bar{\boldsymbol{\omega}}^B.\Bar{\boldsymbol{v}}^B-2(\Bar{\boldsymbol{\omega}}^B.\Bar{\boldsymbol{v}}^B)^2)
\end{align}
\label{eqn:omega_3_hat}
\end{subequations}
Therefore, from \eqref{eqn:omega_3_hat}
\begin{align}
  \widehat{\omega}_3^{B}= (0,-|\Bar{\boldsymbol{\omega}}^B|^4)+\epsilon(0,-2|\Bar{\boldsymbol{\omega}}^B|^2\Bar{\boldsymbol{\omega}}^B.\Bar{\boldsymbol{v}}^B-2(\Bar{\boldsymbol{\omega}}^B.\Bar{\boldsymbol{v}}^B)^2)\nonumber
\end{align}
Again, let $\widehat{\omega}_4^B:=\widehat{\omega}^B\widehat{\omega}_3^B$. Then,
\begin{subequations}
\begin{align}
    &\widehat{\omega}^B\widehat{\omega}_3^B=(\omega^B\omega_3^B)+\epsilon(\omega_3^Bv^B+v_3^B\omega^B),\\ 
    & \omega_3^B\omega^B=(-|\Bar{\boldsymbol{\omega}}^B|^4\Bar{\boldsymbol{\omega}}^B,0),\\
    & \omega_3^Bv^B=(-|\Bar{\boldsymbol{\omega}}^B|^4\Bar{\boldsymbol{v}}^B,0),\\
    & v_3^B\omega^B=(-2|\Bar{\boldsymbol{\omega}}^B|^2(\Bar{\boldsymbol{\omega}}^B.\Bar{\boldsymbol{v}}^B)\Bar{\boldsymbol{\omega}}^B-2(\Bar{\boldsymbol{\omega}}^B.\Bar{\boldsymbol{v}}^B)^2\Bar{\boldsymbol{\omega}}^B,0).
\end{align}
\label{eqn:omega_4_hat}
\end{subequations}
Therefore, from \eqref{eqn:omega_4_hat}
\begin{align}
    \widehat{\omega}_4^B:=\widehat{\omega}^B\widehat{\omega}_3^B=&(-|\Bar{\boldsymbol{\omega}}^B|^4\Bar{\boldsymbol{\omega}}^B,0)+\epsilon(-|\Bar{\boldsymbol{\omega}}^B|^4\Bar{\boldsymbol{v}}^B-\nonumber\\&2|\Bar{\boldsymbol{\omega}}^B|^2(\Bar{\boldsymbol{\omega}}^B.\Bar{\boldsymbol{v}}^B)\Bar{\boldsymbol{\omega}}^B
    -2(\Bar{\boldsymbol{\omega}}^B.\Bar{\boldsymbol{v}}^B)^2\Bar{\boldsymbol{\omega}}^B,0)\nonumber
\end{align}
Hence, the value of $\widehat{{\omega}}^{B(k)}$ can be written as follows:
\begin{itemize}
    \item Case 1: $k$ is odd
    \begin{align}
     \widehat{{\omega}}^{B(k)}  =& (|\Bar{\boldsymbol{\omega}}^B|^{k-1}\Bar{\boldsymbol{\omega}}^B,0)+\epsilon(-|\Bar{\boldsymbol{\omega}}^B|^{(k-1)}\Bar{\boldsymbol{v}}^B-\nonumber\\
&2\sum_{i=1}^{\frac{k-1}{2}}|\Bar{\boldsymbol{\omega}}^B|^{(k-1-2i)}(\Bar{\boldsymbol{\omega}}^B.\Bar{\boldsymbol{v}}^B)^i\Bar{\boldsymbol{\omega}}^B,0)  \nonumber
    \end{align}
    \item Case 2: $k$ is even
    \begin{align}
       \widehat{{\omega}}^{B(k)}   =(\bar{0},-|\Bar{\boldsymbol{\omega}}^B|^{k})+\epsilon(\bar{0},-2\sum_{i=1}^{k/2}|\Bar{\boldsymbol{\omega}}^B|^{(k-2i)}(\Bar{\boldsymbol{\omega}}^B.\Bar{\boldsymbol{v}}^B)^i)\nonumber
    \end{align}
\end{itemize}
The following lemma will be used to prove the pointwise convergence of the observables to zero.
\begin{lemma}
\normalfont For any $\widehat{a}, \widehat{b} \in \mathbb{D},$ we have
\begin{align}
\|\widehat{a} \widehat{b}\| \leq \sqrt{3 / 2}\|\widehat{a}\|\|\widehat{b}\|.
\end{align}
\label{lemma:inequality}
\end{lemma}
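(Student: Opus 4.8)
The plan is to reduce the dual-quaternion bound to a statement about ordinary quaternion multiplication. Writing $\widehat{a}=a_r+\epsilon a_d$ and $\widehat{b}=b_r+\epsilon b_d$, the multiplication rule gives $\widehat{a}\widehat{b}=a_rb_r+\epsilon(a_rb_d+a_db_r)$, so by the definition of the dual-quaternion norm in the preliminaries,
\begin{align}
\|\widehat{a}\widehat{b}\|^2=\|a_rb_r\|^2+\|a_rb_d+a_db_r\|^2,
\end{align}
while $\|\widehat{a}\|^2=\|a_r\|^2+\|a_d\|^2$ and $\|\widehat{b}\|^2=\|b_r\|^2+\|b_d\|^2$. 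The first fact I would establish is that quaternion multiplication is norm-multiplicative, $\|pq\|=\|p\|\|q\|$ for $p,q\in\mathbb{Q}$: this follows from $\|pq\|^2=(pq)(pq)^\star=p(qq^\star)p^\star=\|q\|^2\,pp^\star=\|p\|^2\|q\|^2$, using associativity of quaternion multiplication and $qq^\star=\|q\|^2\in\mathbb{R}$ (a scalar quaternion). Then $\|a_rb_r\|=\|a_r\|\|b_r\|$ exactly, and by the triangle inequality $\|a_rb_d+a_db_r\|\le\|a_r\|\|b_d\|+\|a_d\|\|b_r\|$.

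Setting $p=\|a_r\|,\ q=\|a_d\|,\ r=\|b_r\|,\ s=\|b_d\|$, it then suffices to prove the purely scalar inequality
\begin{align}
p^2r^2+(ps+qr)^2\le \tfrac32\,(p^2+q^2)(r^2+s^2)
\end{align}
for all nonnegative reals $p,q,r,s$. I would expand both sides; the right side minus the left side equals $\tfrac12 p^2r^2+\tfrac32 q^2s^2+\tfrac12 p^2s^2+\tfrac12 q^2r^2-2pqrs$, and this is nonnegative because $\tfrac12 p^2s^2+\tfrac12 q^2r^2\ge pqrs$ (AM–GM) and $\tfrac12 p^2r^2+\tfrac32 q^2s^2-pqrs\ge 0$ (a quadratic form with discriminant $1-3<0$, hence positive semidefinite). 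Combining gives the claim.

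The only mild obstacle is the scalar inequality: one must pick the right grouping of the six terms on the right-hand side so that each group is manifestly nonnegative, and check that the constant $3/2$ is exactly what makes the $q^2s^2$ block work — a slightly smaller constant would fail along $p=r=0$ only if the $(ps+qr)^2$ cross term forced it, but in fact the binding case is $p=s,\ q=r$ (equivalently $\|a_r\|=\|b_d\|$, $\|a_d\|=\|b_r\|$), where equality $\|\widehat a\widehat b\|^2=\tfrac32\|\widehat a\|^2\|\widehat b\|^2$ is attained, confirming the constant is sharp. Everything else is routine: norm-multiplicativity of quaternions and the triangle inequality do all the structural work.
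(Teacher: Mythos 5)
Your proof of the inequality itself is correct and complete, but it takes a genuinely different route from the paper: the paper does not prove this lemma at all, it simply defers to Lemma~1 of the cited reference (Filipe and Tsiotras), whereas you give a self-contained elementary argument. Your reduction --- split $\widehat{a}\widehat{b}=a_rb_r+\epsilon(a_rb_d+a_db_r)$, use exact norm-multiplicativity $\|pq\|=\|p\|\|q\|$ of quaternions together with the triangle inequality, and then verify a four-variable scalar inequality by grouping terms via AM--GM and a negative-discriminant quadratic form --- is sound at every step, and it has the advantage of making the lemma verifiable from the algebraic definitions already given in the preliminaries rather than relying on an external citation. This is a real gain for a reader of this paper.

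One genuine error, though it does not affect the validity of the lemma: your closing claim that the constant $\sqrt{3/2}$ is sharp, with equality at $\|a_r\|=\|b_d\|$, $\|a_d\|=\|b_r\|$, is false. Substituting $p=s$, $q=r$ into your own expression for the difference gives
\begin{align}
\tfrac12 p^2r^2+\tfrac12 p^2s^2+\tfrac12 q^2r^2+\tfrac32 q^2s^2-2pqrs=\tfrac12\bigl(p^4+q^4\bigr)>0
\end{align}
for $(p,q)\neq(0,0)$, so equality is not attained there. In fact, maximizing $p^2r^2+(ps+qr)^2$ over $p^2+q^2=r^2+s^2=1$ yields $4/3$ (attained at $p=r=\sqrt{2/3}$, $q=s=\sqrt{1/3}$), and this value is realized by actual dual quaternions with scalar real and dual parts, so your argument actually establishes the stronger bound $\|\widehat{a}\widehat{b}\|\le\sqrt{4/3}\,\|\widehat{a}\|\|\widehat{b}\|$, of which the stated $\sqrt{3/2}$ bound is a weaker consequence. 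I would either delete the sharpness remark or replace it with the observation that $\sqrt{4/3}$ is the optimal constant.
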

\begin{proof}
Refer to the proof of Lemma 1 from \cite{nuno_filipe2013rigid_nuno_rigid}.
\end{proof}
\begin{assumption}
\normalfont We assume that the maximum angular and linear velocities of the rigid body are constrained and are known a-priori. In other words, there exists some $\Bar{\boldsymbol{\omega}}_{0}$ and $\Bar{\boldsymbol{v}}_{0}$ such that
\begin{align}
 {\omega}_0>\underset{\Bar{\boldsymbol{\omega}}}{\texttt{max}}(|\Bar{\boldsymbol{\omega}}|),\;\quad  {v}_0>\underset{\Bar{\boldsymbol{v}}}{\texttt{max}}(|\Bar{\boldsymbol{v}}|). \nonumber
\end{align}
\end{assumption}
Now, let us consider the normalized angular and linear velocities which are defined as follows:
\begin{align}
    \|\tilde{\Bar{\boldsymbol{\omega}}}\|=\frac{\|\Bar{\boldsymbol{\omega}}\|}{\texttt{max}(\{\omega_0,v_0\})}<1,\;\;  \|\tilde{\Bar{\boldsymbol{v}}}\|=\frac{\|\Bar{\boldsymbol{v}}\|}{\texttt{max}(\{\omega_0,v_0\})}<1\nonumber
\end{align}
Next, we define the modified observable function $\widehat{f}_k$ as 
\begin{align}
    \widehat{f}_k=\widehat{q}\widetilde{\widehat{\omega}}^k
\end{align}
where $\tilde{\widehat{\omega}}=(\tilde{\Bar{\boldsymbol{\omega}}},0)+\epsilon(\tilde{\Bar{\boldsymbol{v}}},0)$. The expression of $\widehat{f}_k$ can be written in terms of $f_k$ as follows:
\begin{align}
    \widehat{f}_k=(\widehat{\omega}_0)^kf_k,
\end{align}
where $\widehat{\omega}_0=(0,\texttt{max}(\{\omega_0,v_0\}))+\epsilon(\Bar{0},0)$. The linear dynamics in the lifted space can then be written as follows:
\begin{align}
    \dot{\widehat{f}}_k=\widehat{\omega}_0\widehat{f}_{k+1}+\widehat{q}\sum_{i=1}^k \tilde{\widehat{\omega}}^{B {(i-1)}}({M^{-1}}\star\widetilde{\boldsymbol{u}})^s \tilde{\widehat{\omega}}^{B{(k-i)}} 
\end{align}
In addition, let
\begin{align}
    B_k:=\sum_{i=1}^k \tilde{\widehat{\omega}}^{B {(i-1)}}({M^{-1}}\star\tilde{\boldsymbol{u}})^s \tilde{\widehat{\omega}}^{B{(k-i)}} \label{eqn:bk_expression}
\end{align}
Next we consider the sets $\mathcal{D}_{\widehat{\omega}}$ and $\mathcal{D}_{\widehat{v}}$ where
\begin{align}
    \mathcal{D}_{\widehat{\omega}}:=\{\widetilde{\widehat{\omega}}:\|\tilde{\Bar{\boldsymbol{\omega}}}\|<1\},\;\;
     \mathcal{D}_{\widehat{v}}:=\{\widetilde{\widehat{v}}:\|\tilde{\Bar{\boldsymbol{v}}}\|<1\}\nonumber
\end{align}

\begin{lemma}
\normalfont For $k\in[2,N]_d$, the following holds true:
\begin{align}
  \underset{\texttt{max}(\{\omega_0,v_0\})\to\infty}{\lim} B_k=\widehat{0},\quad \underset{k\to\infty}{\lim} B_k=\widehat{0}
\end{align}
\end{lemma}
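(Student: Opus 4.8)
The plan is to derive a single quantitative bound on $\|B_k\|$ from which both limits follow immediately. Throughout, write $\mu:=\texttt{max}\{\|\tilde{\Bar{\boldsymbol{\omega}}}\|,\|\tilde{\Bar{\boldsymbol{v}}}\|\}$, so that $\mu<1$ on $\mathcal{D}_{\widehat{\omega}}\times\mathcal{D}_{\widehat{v}}$, and set $\gamma:=\|(M^{-1}\star\widetilde{\boldsymbol{u}})^{s}\|$, which is finite because $M^{-1}$ is a fixed matrix and the normalized input is bounded. The only place the geometry of the problem enters is a decay estimate for the iterated products $\widetilde{\widehat{\omega}}^{B(a)}$; everything else is the triangle inequality together with the submultiplicativity bound of Lemma~\ref{lemma:inequality}.

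First I would establish that
\begin{align}
\|\widetilde{\widehat{\omega}}^{B(a)}\|\ \le\ (a+1)\,\mu^{a},\qquad a\ge 0, \nonumber
\end{align}
working directly from the closed-form expressions for $\widehat{\omega}^{B(k)}$ obtained above, with the normalized velocities substituted in. The real part of $\widetilde{\widehat{\omega}}^{B(a)}$ has norm exactly $|\tilde{\Bar{\boldsymbol{\omega}}}|^{a}\le\mu^{a}$ in both the odd and even cases, while each of the at most $a/2$ terms making up the dual part is bounded in norm by $|\tilde{\Bar{\boldsymbol{\omega}}}|^{\,a-i}|\tilde{\Bar{\boldsymbol{v}}}|^{\,i}\le\mu^{a}$, using $|\tilde{\Bar{\boldsymbol{\omega}}}\cdot\tilde{\Bar{\boldsymbol{v}}}|\le\|\tilde{\Bar{\boldsymbol{\omega}}}\|\,\|\tilde{\Bar{\boldsymbol{v}}}\|$; summing and recombining the real and dual parts gives the claim.

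Next I would bound $B_k$ directly. Applying the triangle inequality to the sum in \eqref{eqn:bk_expression}, then Lemma~\ref{lemma:inequality} twice to each summand $\widetilde{\widehat{\omega}}^{B(i-1)}(M^{-1}\star\widetilde{\boldsymbol{u}})^{s}\widetilde{\widehat{\omega}}^{B(k-i)}$ (which carries exactly $k-1$ factors of $\widetilde{\widehat{\omega}}$, since $(i-1)+(k-i)=k-1$), and finally the estimate above, I obtain
\begin{align}
\|B_k\|\ \le\ \tfrac{3}{2}\,\gamma\,\mu^{k-1}\sum_{i=1}^{k} i\,(k-i+1)\ \le\ \tfrac{3}{2}\,\gamma\,k^{3}\,\mu^{k-1}. \nonumber
\end{align}
Since $\mu<1$, the right-hand side tends to $0$ as $k\to\infty$, which gives the second limit. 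For the first limit, fix $k\in[2,N]_d$; as $\texttt{max}(\{\omega_0,v_0\})\to\infty$, the standing assumption that the true angular and linear velocities are bounded a priori forces $\mu=\texttt{max}\{\|\Bar{\boldsymbol{\omega}}\|,\|\Bar{\boldsymbol{v}}\|\}/\texttt{max}(\{\omega_0,v_0\})\to0$, and since $k-1\ge1$ we get $\mu^{k-1}\to0$, hence $\|B_k\|\to0$ again. In both cases $B_k\to\widehat{0}$.

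The hard part will be the decay estimate for $\widetilde{\widehat{\omega}}^{B(a)}$. The naive bound obtained by iterating Lemma~\ref{lemma:inequality}, namely $\|\widetilde{\widehat{\omega}}^{B(a)}\|\le(\sqrt{3/2})^{a-1}\|\widetilde{\widehat{\omega}}\|^{a}$, is useless here: $\sqrt{3/2}\,\|\widetilde{\widehat{\omega}}\|$ need not be below $1$, so this bound grows with $a$ and says nothing about $k\to\infty$. One genuinely has to exploit the explicit closed form of $\widehat{\omega}^{B(k)}$ to see that only the scalars $|\tilde{\Bar{\boldsymbol{\omega}}}|<1$ and $|\tilde{\Bar{\boldsymbol{v}}}|<1$ appear, raised to powers growing linearly in the iteration index, which is what produces honest geometric decay; handling the odd/even cases and the finite inner sums uniformly is the bulk of the bookkeeping. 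A secondary point is checking that $\gamma$ stays bounded in both limits, which it does because $M^{-1}$ is fixed and the normalized control input is bounded by construction.
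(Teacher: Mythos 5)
Your proposal is correct and follows essentially the same route as the paper: bound each summand of $B_k$ via the submultiplicativity inequality of Lemma~\ref{lemma:inequality}, exploit that the normalized velocities have norm below one to get geometric decay in $k$, and for the first limit note that $\widetilde{\widehat{\omega}}\to\widehat{0}$ as $\texttt{max}(\{\omega_0,v_0\})\to\infty$. Your version is in fact more careful than the paper's --- the paper's bound $\|B_k\|\leq\frac{3}{2}k\|\widetilde{\widehat{\omega}}^{B(k-i)}\|\,\|(M^{-1}\star\widetilde{\boldsymbol{u}})^{s}\|$ leaves the index $i$ dangling and invokes $kx^{k}\to0$ without justifying the geometric decay of the iterated products, which is exactly the gap your estimate $\|\widetilde{\widehat{\omega}}^{B(a)}\|\leq(a+1)\mu^{a}$ closes.
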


\begin{proof}
Since $\tilde{\widehat{\omega}}=[(\Bar{\boldsymbol{\omega}},0)+\epsilon(\Bar{\boldsymbol{v}},0)]/\texttt{max}(\{\omega_0,v_0\})$. Therefore
\begin{align}
   \underset{\texttt{max}(\{\omega_0,v_0\})\to\infty}{\lim} \tilde{\widehat{\omega}}=(\Bar{\boldsymbol{0}},0)+\epsilon(\Bar{\boldsymbol{0}},0)
\end{align}
Subsequently from \eqref{eqn:bk_expression},
\begin{align}
  \underset{\omega_0\to\infty}{\lim} B_k =(\Bar{{0}},0)+\epsilon(\Bar{{0}},0)
\end{align}
Further using Lemma \ref{lemma:inequality},
\begin{align}
    \|B_k\|\leq \frac{3}{2}k\|\tilde{\widehat{\omega}}^{B{(k-i)}}\|\|({M^{-1}}\star\tilde{\boldsymbol{u}})^s\|
    \label{eqn:bk_norm}
\end{align}
Since $\underset{k\to\infty}{\lim}kx^k=0$ for $x<1$, taking limit on both sides of \eqref{eqn:bk_norm}
\begin{align}
    \underset{k\to\infty}{\lim} \|B_k\|=0.\nonumber
\end{align}
Consequently, $\underset{k\to\infty}{\lim} B_k=\widehat{0}$.
This completes the proof.
\end{proof}
\begin{remark}
\normalfont For higher value of $\omega_0$, $B_k$ (for all $k\in[2,N]_d$) can be approximated to be the zero dual number i.e. $B_k \approx(\Bar{\boldsymbol{0}},0)+\epsilon(\Bar{\boldsymbol{0}},0) $. In other words, as $k$ and $\omega_0$ increases, the dependence of the states on $B_k$ decreases. Thereafter, the lifted space linear dynamics can be approximated as follows:
\begin{subequations}
\begin{align}
       &\dot{\widehat{f}}_1=\widehat{\omega}_0\widehat{f}_{2}+\widehat{q}({M^{-1}}\star\tilde{\boldsymbol{u}})^s/\omega_0,\\
       &     \dot{\widehat{f}}_k=\widehat{\omega}_0\widehat{f}_{k+1}+\widehat{q}B_k,\quad \quad k\in[2,N]_d
       \label{eqn:approximated_lifted_space_dynamics}
\end{align}
\end{subequations}
\end{remark}
\begin{theorem}
\normalfont {For any $\boldsymbol{\omega}\in\mathcal{D}_{\widehat{\omega}}$ and $\boldsymbol{v}\in\mathcal{D}_{\widehat{v}}$, the sequences of functions $\widehat{f}_k$ and $\dot{\widehat{f}}_k$ converge pointwise to $\widehat{0}$, i.e.,
\begin{figure*}[]
 \captionsetup[subfigure]{justification=centering}
 \centering
 \begin{subfigure}{0.23\textwidth}
{\includegraphics[scale=0.22]{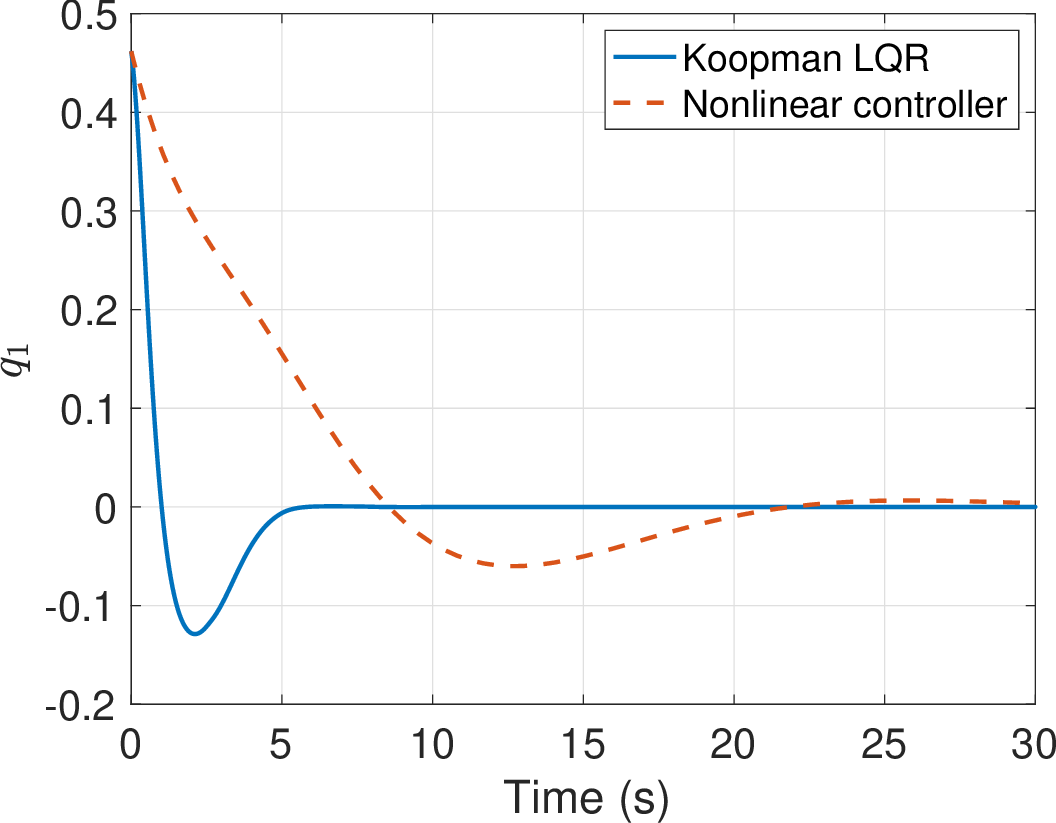}}
\caption{$q_1$}
\label{fig:}
 \end{subfigure}
\label{fig:}
 \begin{subfigure}{0.23\textwidth}
{\includegraphics[scale=0.22]{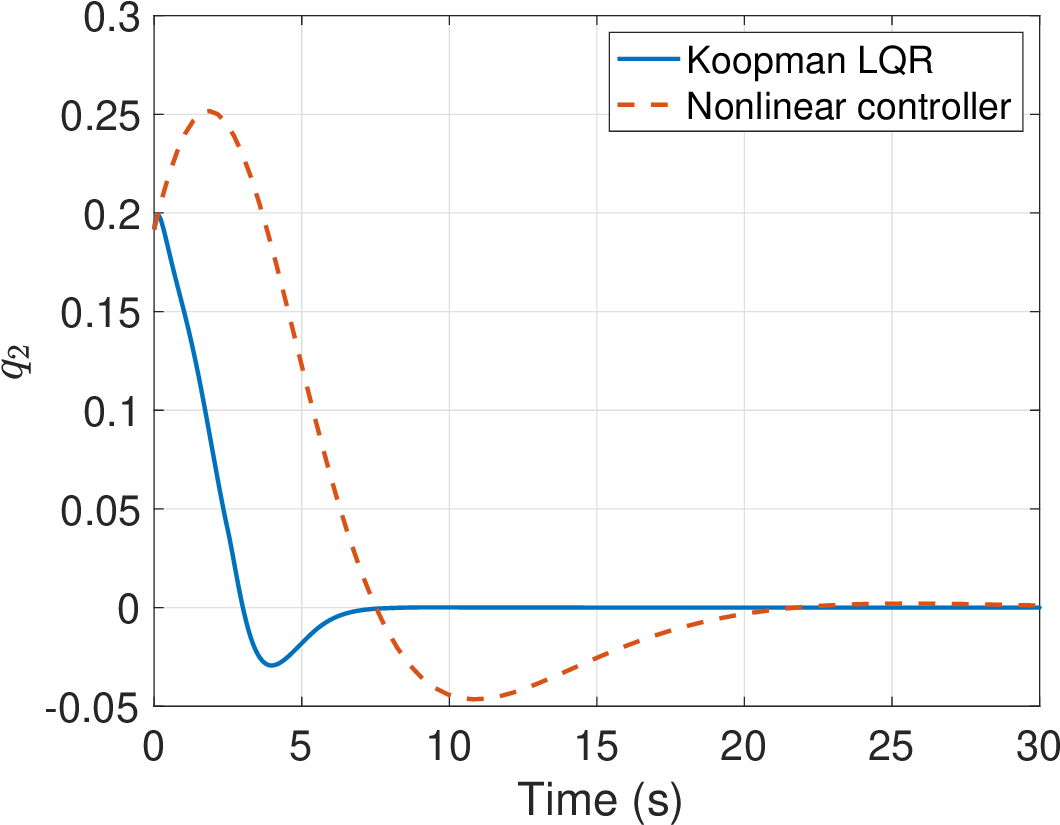}}
 \caption{$q_2$}
\label{fig:approximation_e_v}
 \end{subfigure}
\label{fig:}
 \begin{subfigure}{0.23\textwidth}
{\includegraphics[scale=0.22]{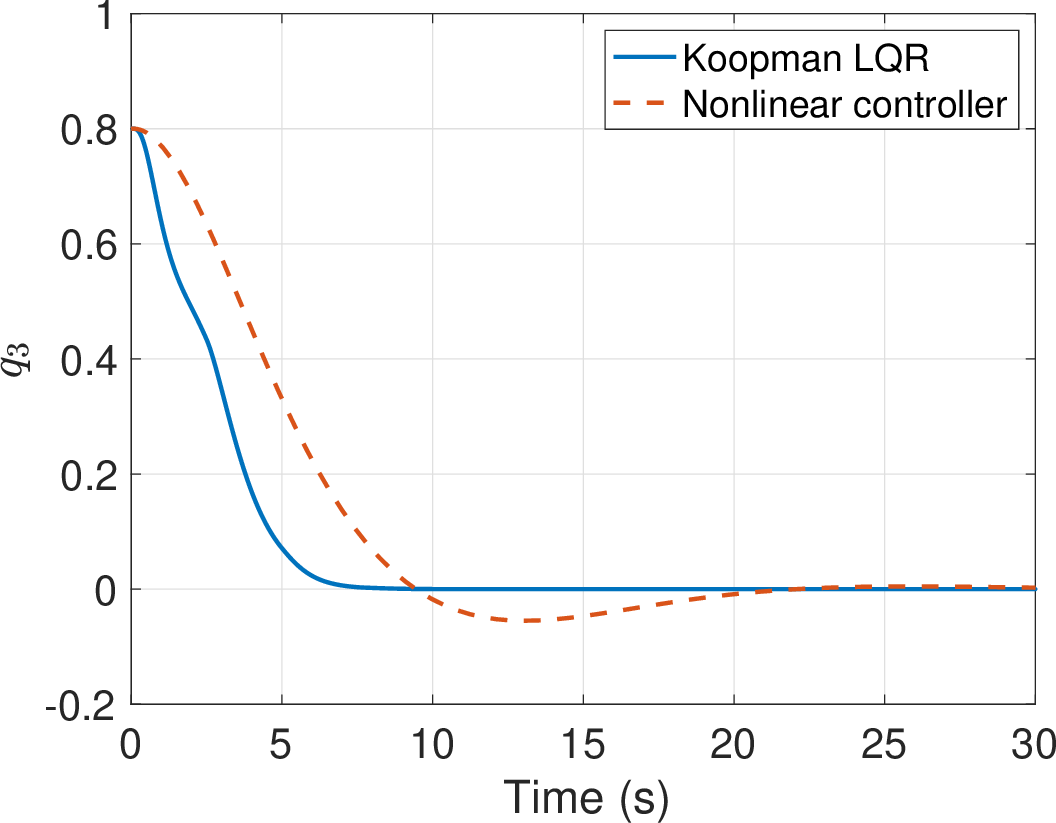}}
\caption{$q_3$}
\label{fig:}
\end{subfigure}
 \begin{subfigure}{0.23\textwidth}
{\includegraphics[scale=0.22]{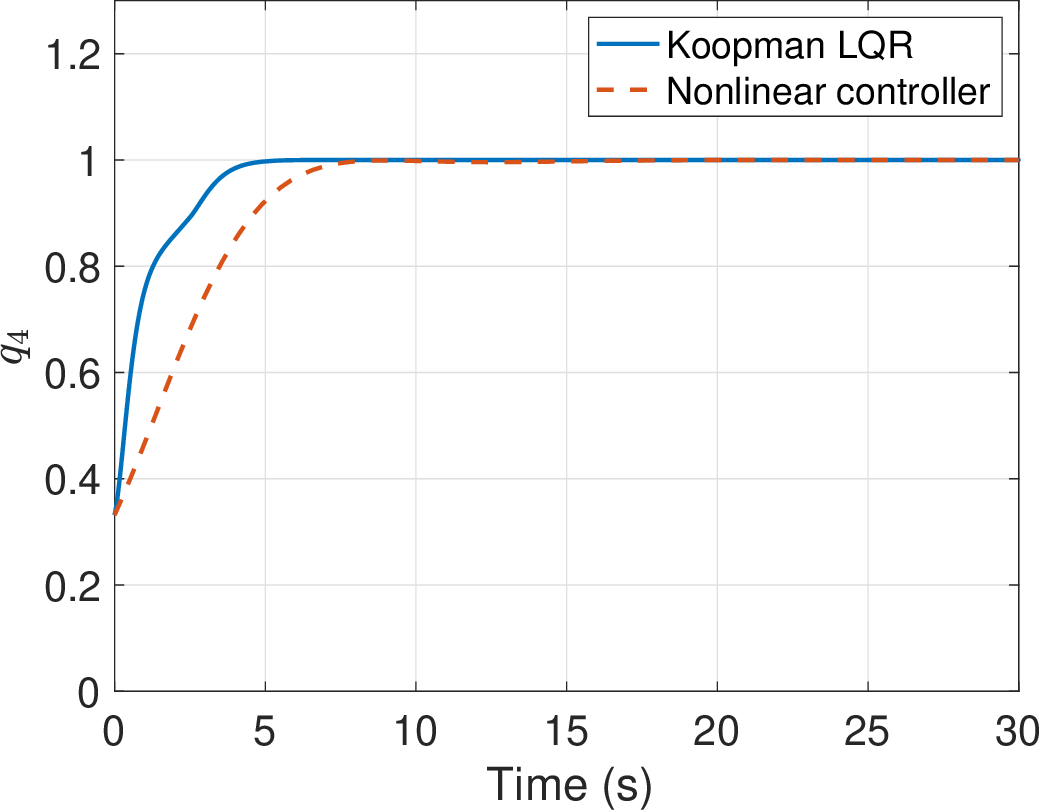}}
\caption{$q_4$}
\label{fig:}
\end{subfigure}
 \begin{subfigure}{0.23\textwidth}
{\includegraphics[scale=0.22]{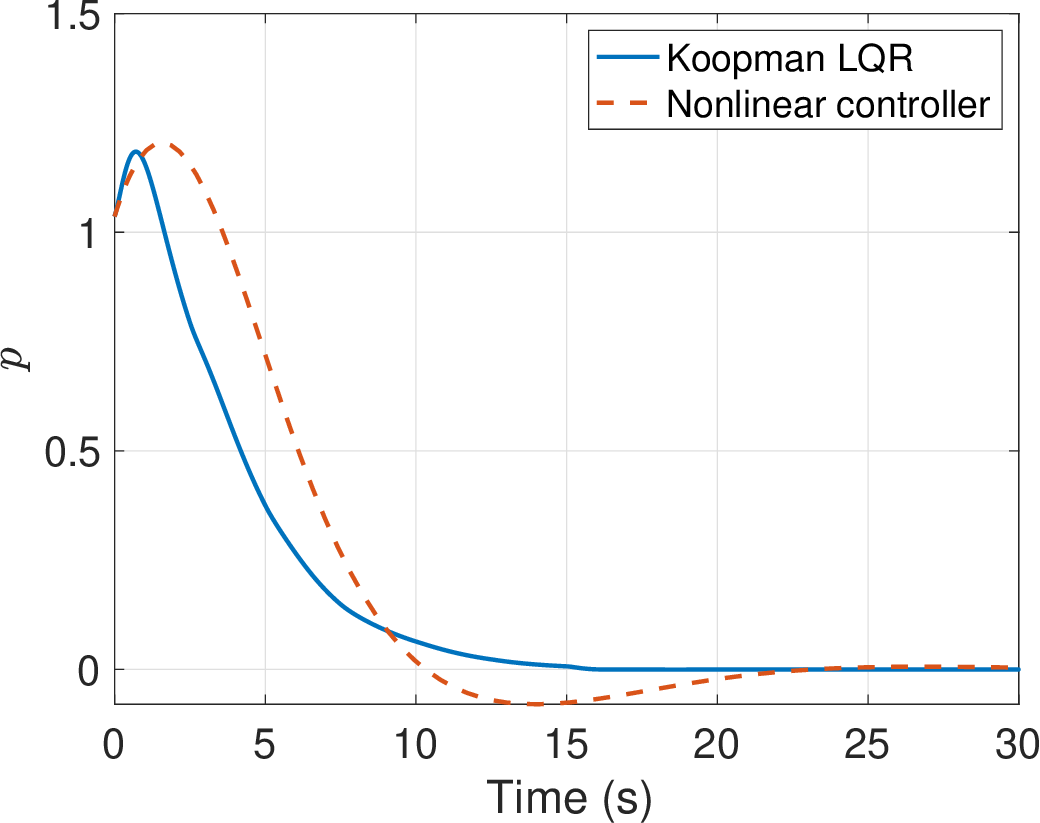}}
\caption{$q_5$}
\label{fig:}
 \end{subfigure}
\label{fig:}
 \begin{subfigure}{0.23\textwidth}
{\includegraphics[scale=0.22]{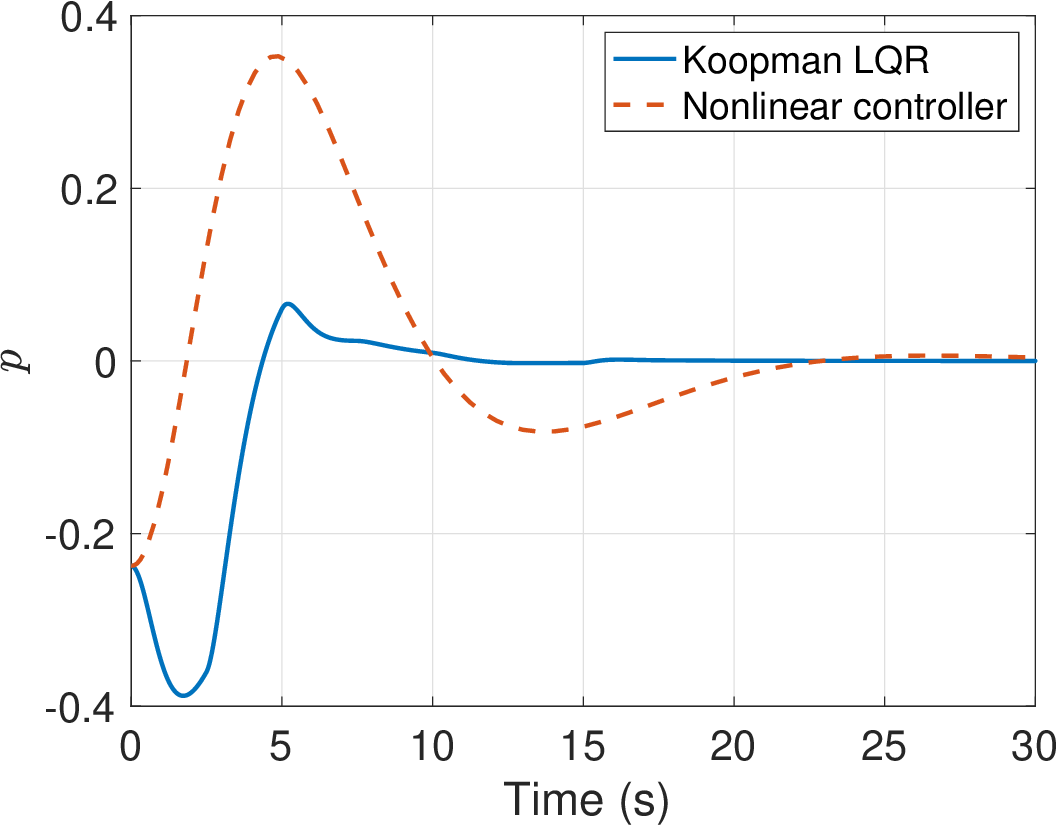}}
 \caption{$q_6$}
\label{fig:approximation_e_v}
 \end{subfigure}
\label{fig:}
 \begin{subfigure}{0.23\textwidth}
{\includegraphics[scale=0.22]{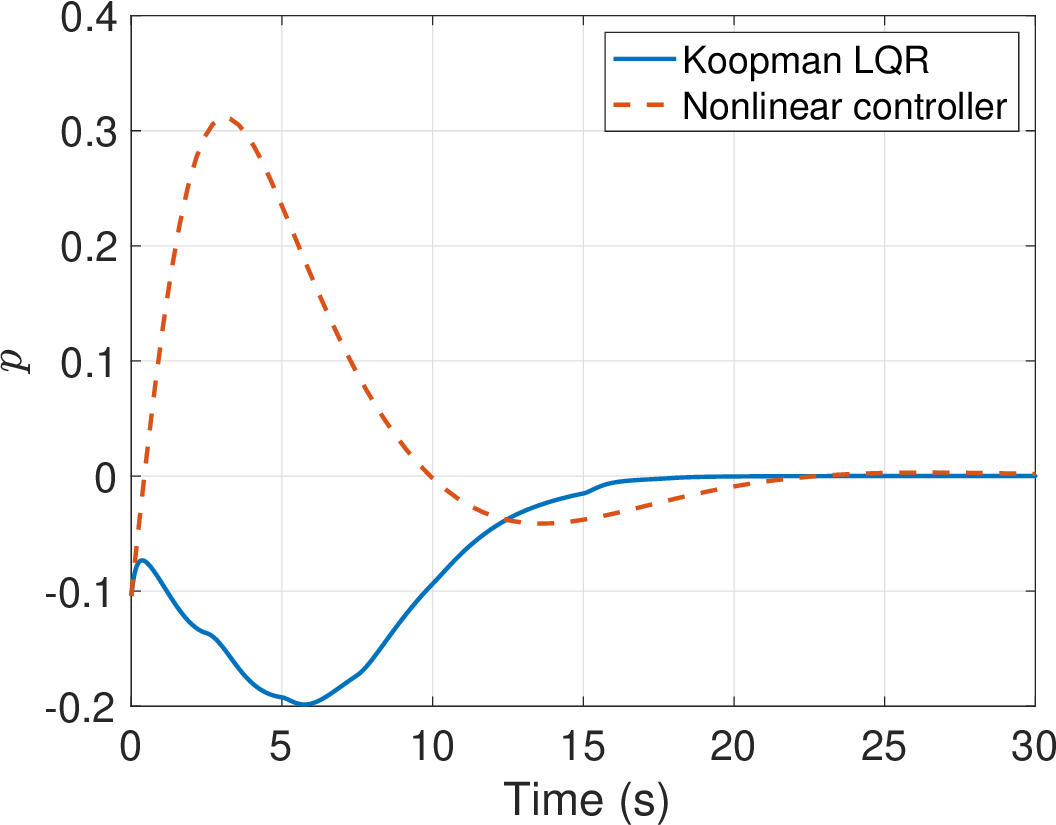}}
\caption{$q_7$}
\label{fig:}
\end{subfigure}
 \begin{subfigure}{0.23\textwidth}
{\includegraphics[scale=0.22]{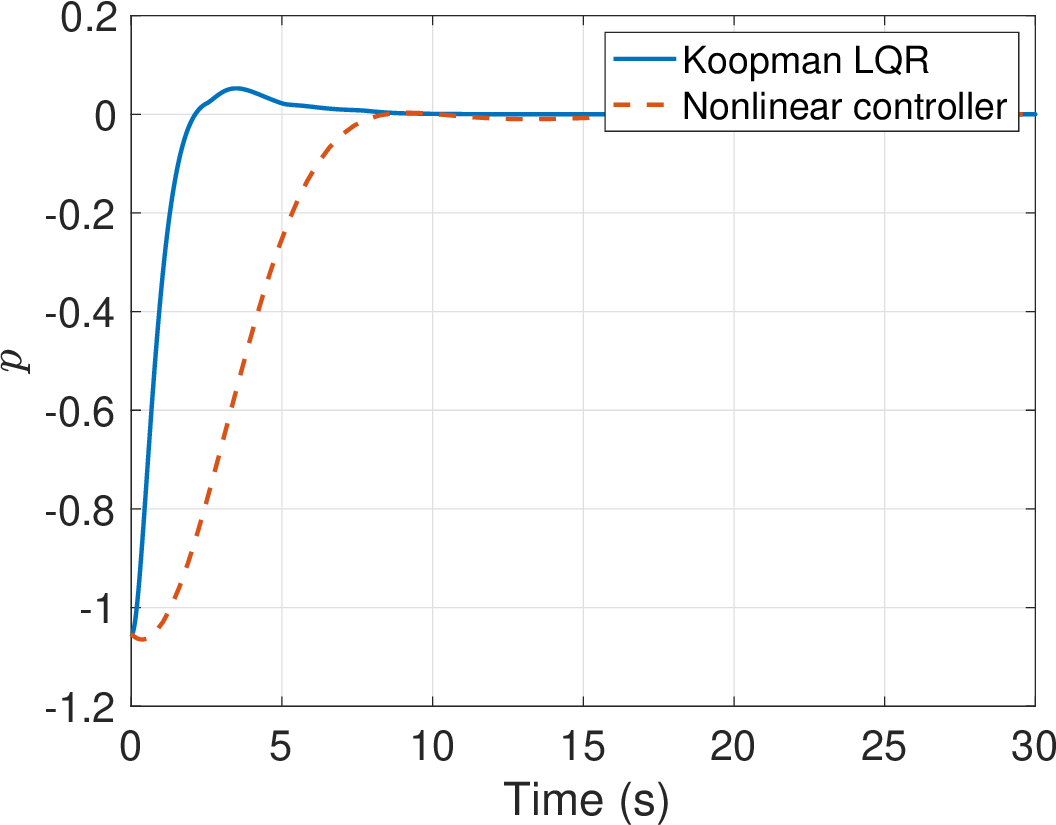}}
\caption{$q_8$}
\label{fig:}
\end{subfigure}
\caption{Evolution of the dual quaternion $\widehat{q}$ with time}
\label{fig:quaternion_plot}
\end{figure*}
\begin{align}
   &\underset{k\to\infty}{\lim}{\widehat{f}}_k=\widehat{{0}},\quad  \underset{k\to\infty}{\lim}{\dot{\widehat{f}}}_k=\widehat{{0}}\quad\forall\;\boldsymbol{\omega}\in\mathcal{D}_{\widehat{\omega}},\;\; \boldsymbol{v}\in\mathcal{D}_{\widehat{v}}.\nonumber
\end{align}}
\label{prop:f_k_hat}
\end{theorem}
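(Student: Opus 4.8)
The plan is to reduce both claims to a single decay estimate for $\|\widetilde{\widehat{\omega}}^{k}\|$ and then quote the results already in hand. Fix a state, i.e.\ fix $\widehat{q}\in\mathbb{D}$ (of finite norm, since $\widehat{q}=q+\epsilon\tfrac12 q t^B$ with $q$ a unit quaternion and $t^B$ the finite translation) together with $\boldsymbol{\omega}\in\mathcal{D}_{\widehat{\omega}}$ and $\boldsymbol{v}\in\mathcal{D}_{\widehat{v}}$; then $\|\widehat{q}\|$ and $\|\widehat{\omega}_0\|=\texttt{max}(\{\omega_0,v_0\})$ are constants. Since $\widehat{f}_k=\widehat{q}\,\widetilde{\widehat{\omega}}^{k}$, Lemma~\ref{lemma:inequality} gives $\|\widehat{f}_k\|\le\sqrt{3/2}\,\|\widehat{q}\|\,\|\widetilde{\widehat{\omega}}^{k}\|$, so the first claim reduces to showing $\|\widetilde{\widehat{\omega}}^{k}\|\to 0$ as $k\to\infty$.

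For that estimate I would \emph{not} iterate Lemma~\ref{lemma:inequality} on $\widetilde{\widehat{\omega}}^{k}$: that introduces a factor $(\sqrt{3/2})^{k}$, and since $\|\widetilde{\widehat{\omega}}\|^{2}=\|\tilde{\bar{\boldsymbol{\omega}}}\|^{2}+\|\tilde{\bar{\boldsymbol{v}}}\|^{2}$ can approach $2$ on $\mathcal{D}_{\widehat{\omega}}\times\mathcal{D}_{\widehat{v}}$, the resulting crude bound diverges. Instead one should exploit the dual structure. Writing $\widetilde{\widehat{\omega}}=\tilde{\omega}+\epsilon\tilde{v}$ with $\tilde{\omega}=(\tilde{\bar{\boldsymbol{\omega}}},0)$, $\tilde{v}=(\tilde{\bar{\boldsymbol{v}}},0)$ pure-vector quaternions, the relation $\epsilon^{2}=0$ together with associativity of (dual) quaternion multiplication yield, by a one-line induction,
\begin{align}
\widetilde{\widehat{\omega}}^{k}=\tilde{\omega}^{k}+\epsilon\sum_{j=0}^{k-1}\tilde{\omega}^{j}\tilde{v}\,\tilde{\omega}^{k-1-j}.\nonumber
\end{align}
Quaternion multiplication preserves norms and $\|\tilde{\omega}\|=\|\tilde{\bar{\boldsymbol{\omega}}}\|$, so $\|\tilde{\omega}^{k}\|=\|\tilde{\bar{\boldsymbol{\omega}}}\|^{k}$ and $\big\|\sum_{j}\tilde{\omega}^{j}\tilde{v}\,\tilde{\omega}^{k-1-j}\big\|\le k\,\|\tilde{\bar{\boldsymbol{\omega}}}\|^{k-1}\|\tilde{\bar{\boldsymbol{v}}}\|$, hence $\|\widetilde{\widehat{\omega}}^{k}\|\le\|\tilde{\bar{\boldsymbol{\omega}}}\|^{k}+k\,\|\tilde{\bar{\boldsymbol{\omega}}}\|^{k-1}\|\tilde{\bar{\boldsymbol{v}}}\|$. (Equivalently, one may substitute $\tilde{\bar{\boldsymbol{\omega}}},\tilde{\bar{\boldsymbol{v}}}$ into the closed forms for $\widehat{\omega}^{B(k)}$ derived above and bound the real and dual parts using $|\tilde{\bar{\boldsymbol{\omega}}}\cdot\tilde{\bar{\boldsymbol{v}}}|\le\|\tilde{\bar{\boldsymbol{\omega}}}\|\|\tilde{\bar{\boldsymbol{v}}}\|$.) Since $\|\tilde{\bar{\boldsymbol{\omega}}}\|<1$ and $\lim_{k\to\infty}k x^{k}=0$ for $x\in[0,1)$ (the fact already used in the preceding lemma), the right-hand side tends to $0$, so $\widehat{f}_k\to\widehat{0}$.

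For the derivatives I would use the lifted dynamics $\dot{\widehat{f}}_k=\widehat{\omega}_0\widehat{f}_{k+1}+\widehat{q}B_k$ together with Lemma~\ref{lemma:inequality}:
\begin{align}
\|\dot{\widehat{f}}_k\|\le\sqrt{3/2}\,\|\widehat{\omega}_0\|\,\|\widehat{f}_{k+1}\|+\sqrt{3/2}\,\|\widehat{q}\|\,\|B_k\|.\nonumber
\end{align}
The first term $\to 0$ by the estimate just established (with $\|\widehat{\omega}_0\|$ a fixed constant), and the second $\to 0$ because $\lim_{k\to\infty}B_k=\widehat{0}$ by the preceding lemma; hence $\dot{\widehat{f}}_k\to\widehat{0}$, which completes the proof.

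I expect the only genuine obstacle to be the decay estimate for $\|\widetilde{\widehat{\omega}}^{k}\|$: the off-the-shelf submultiplicativity bound of Lemma~\ref{lemma:inequality} is not sharp enough on the full domain, so one must exploit nilpotency ($\epsilon^{2}=0$) and the fact that a pure-vector quaternion squares to a nonpositive real (equivalently, the explicit $\widehat{\omega}^{B(k)}$ formulas) to expose the polynomial-times-geometric form $k\,\|\tilde{\bar{\boldsymbol{\omega}}}\|^{k-1}$ that actually vanishes. Everything else is routine norm bookkeeping.
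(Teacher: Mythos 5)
Your proof is correct and follows the same overall strategy as the paper's: reduce $\widehat{f}_k\to\widehat{0}$ to the decay of $\|\widetilde{\widehat{\omega}}^{k}\|$ by a single application of Lemma~\ref{lemma:inequality}, establish a polynomial-times-geometric bound on the real and dual parts of $\widetilde{\widehat{\omega}}^{k}$, and then pass to the limit in the lifted dynamics $\dot{\widehat{f}}_k=\widehat{\omega}_0\widehat{f}_{k+1}+\widehat{q}B_k$. Where you genuinely differ is in how the central decay estimate is obtained: the paper substitutes into the explicit closed-form expressions for $\widehat{\omega}^{B(k)}$ (the odd/even case split derived just before the theorem), bounds the resulting sums via Cauchy--Schwarz, and evaluates them with the geometric-series identity \eqref{eqn:summation}; you instead expand $(\tilde{\omega}+\epsilon\tilde{v})^{k}=\tilde{\omega}^{k}+\epsilon\sum_{j=0}^{k-1}\tilde{\omega}^{j}\tilde{v}\,\tilde{\omega}^{k-1-j}$ using nilpotency of $\epsilon$ and invoke multiplicativity of the quaternion norm to get $\|\widetilde{\widehat{\omega}}^{k}\|\le\|\tilde{\bar{\boldsymbol{\omega}}}\|^{k}+k\,\|\tilde{\bar{\boldsymbol{\omega}}}\|^{k-1}\|\tilde{\bar{\boldsymbol{v}}}\|$ in two lines. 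Your route buys some robustness: the paper's formula \eqref{eqn:summation} divides by $1-\|\tilde{\bar{\boldsymbol{v}}}\|/\|\tilde{\bar{\boldsymbol{\omega}}}\|$ and is therefore invalid on the measure-zero set where the two norms coincide (the conclusion still holds there, but the paper does not say why), and your explicit warning against iterating the submultiplicative bound of Lemma~\ref{lemma:inequality} --- which would introduce a divergent factor $(\sqrt{3/2})^{k}$ since $\|\widetilde{\widehat{\omega}}\|^{2}$ can approach $2$ on the stated domain --- identifies the one real trap in this argument, a point the paper leaves implicit. Your handling of $\dot{\widehat{f}}_k$ is likewise slightly more careful than the paper's, which simply ``takes limits on both sides'' of \eqref{eqn:approximated_lifted_space_dynamics}, whereas you bound both terms in norm and cite the preceding lemma on $B_k\to\widehat{0}$ explicitly.
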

\begin{proof}
Since $\|\tilde{\Bar{\boldsymbol{\omega}}}\|<1$ and $\|\tilde{\widetilde{\Bar{\boldsymbol{v}}}}\|<1$, we have
\begin{align}
    \underset{k\to\infty}{\lim}\|\widetilde{\Bar{\boldsymbol{\omega}}}^{B(k-1)}\|\widetilde{\Bar{\boldsymbol{\omega}}}=\mathbf{0},\quad  \underset{k\to\infty}{\lim}\|\widetilde{\Bar{\boldsymbol{\omega}}}^{B(k)}\|=0.
    \label{eqn:limit_omega_h_is_0}
\end{align}
In addition, we have
\begin{align}
  \sum_{i=1}^k\|\widetilde{\Bar{\boldsymbol{\omega}}}^B\|^{(k-2i)}(\widetilde{\Bar{\boldsymbol{\omega}}}^B.\widetilde{\Bar{\boldsymbol{v}}}^B)^i\leq\sum_{i=1}^k\|\widetilde{\Bar{\boldsymbol{\omega}}}^B\|^{(k-i)}\|\widetilde{\Bar{\boldsymbol{v}}}^B\|^{(i)}  
\end{align}

Now, using the formula for the sum of geometric series, we have
\begin{align}
\small
   \sum_{i=1}^k\|\widetilde{\Bar{\boldsymbol{\omega}}}^B\|^{(k-i)}\|\widetilde{\Bar{\boldsymbol{v}}}^B|^{(i)}&=\small\frac{\|\widetilde{\Bar{\boldsymbol{\omega}}}^B|^{(k-1)}\|\widetilde{\Bar{\boldsymbol{v}}}^B\|(1-(|\widetilde{\Bar{\boldsymbol{v}}}^B\|/\|\widetilde{\Bar{\boldsymbol{\omega}}}^B\|)^k)}{1-\|\widetilde{\Bar{\boldsymbol{v}}}^B\|/\|\widetilde{\Bar{\boldsymbol{\omega}}}^B\|}
   \label{eqn:summation}
\end{align}
Taking limits on both sides of \eqref{eqn:summation} gives
\begin{align}
    \underset{k\to\infty}{\lim}\sum_{i=1}^k\|\widetilde{\Bar{\boldsymbol{\omega}}}^B|^{(k-i)}\|\|\widetilde{\Bar{\boldsymbol{v}}}^B\|^{(i)}=0.
    \label{eqn:limit_sum_is_0}
\end{align}
Similarly, it can be shown that
\begin{align}
  \underset{k\to\infty}{\lim}\|\widetilde{\Bar{\boldsymbol{\omega}}}^B\|^{(k-1)}\widetilde{\Bar{\boldsymbol{v}}}^B+2\sum_{i=1}^k\|\widetilde{\Bar{\boldsymbol{\omega}}}^B\|^{(k-1-2i)}(\widetilde{\Bar{\boldsymbol{\omega}}}^B.\widetilde{\Bar{\boldsymbol{v}}}^B)^i\widetilde{\Bar{\boldsymbol{\omega}}}^B=\mathbf{0}
  \label{eqn:limit_sum_is_0_second}
\end{align}
Using \eqref{eqn:limit_omega_h_is_0}, \eqref{eqn:limit_sum_is_0} and \eqref{eqn:limit_sum_is_0_second}, we conclude that
\begin{align}
  \underset{k\to\infty}{\lim}\widehat{{\omega}}^{B(k)} =(\Bar{\boldsymbol{0}},0)+\epsilon(\Bar{\boldsymbol{0}},0). 
\end{align}
Since $\widehat{f}_k=\widehat{q}\tilde{\widehat{\omega}}^{B(k)}$, using Lemma 2, we have
\begin{align}
   \|\widehat{q} \tilde{\widehat{\omega}}^{B(k)}\| \leq \sqrt{3 / 2}\|\widehat{q}\|\|\tilde{\widehat{\omega}}^{B(k)}\| .
\end{align}
Now, since $\underset{k\to\infty}{\lim}\|\widetilde{\Bar{\boldsymbol{\omega}}}^B\|^{(k)} =0 $, we have
\begin{align}
 \underset{k\to\infty}{\lim} \|\widehat{q} \tilde{\widehat{\omega}}^{B(k)}\|\leq 0\implies \underset{k\to\infty}{\lim}\widehat{f}_k=\widehat{\boldsymbol{0}}.\nonumber
\end{align}
Taking limits on both sides of \eqref{eqn:approximated_lifted_space_dynamics}, we get
\begin{align}
   \underset{k\to\infty}{\lim}\dot{\widehat{f}}_k= \widehat{\boldsymbol{0}}.
\end{align}
Hence, the theorem is proved.
\end{proof}

\begin{theorem}
\normalfont For any $\boldsymbol{\omega}\in\mathcal{D}_{\widehat{\omega}}$ and $\boldsymbol{v}\in\mathcal{D}_{\widehat{v}}$, the following holds true
\begin{align}
    \|\widehat{f}_k\|>\|\widehat{f}_{k+1}\|, \quad k\in[2,N]_d
\end{align}
\end{theorem}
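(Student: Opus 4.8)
The plan is to reduce the monotonicity claim to a one-step contraction estimate. Since $\widehat{f}_k=\widehat{q}\,\widetilde{\widehat{\omega}}^{\,k}$ and dual-quaternion multiplication is associative (Lemma 1), the $k$-fold product satisfies $\widetilde{\widehat{\omega}}^{\,k+1}=\widetilde{\widehat{\omega}}^{\,k}\widetilde{\widehat{\omega}}$, hence $\widehat{f}_{k+1}=\widehat{q}\big(\widetilde{\widehat{\omega}}^{\,k}\widetilde{\widehat{\omega}}\big)=\big(\widehat{q}\,\widetilde{\widehat{\omega}}^{\,k}\big)\widetilde{\widehat{\omega}}=\widehat{f}_k\,\widetilde{\widehat{\omega}}$. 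So it suffices to show that right multiplication by $\widetilde{\widehat{\omega}}=(\widetilde{\Bar{\boldsymbol{\omega}}},0)+\epsilon(\widetilde{\Bar{\boldsymbol{v}}},0)$ strictly shrinks the dual-quaternion norm for $k\in[2,N]_d$.

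Writing $\widehat{f}_k=(F_r,F_d)$ with $F_r,F_d\in\mathbb{Q}$ and using the dual-quaternion product rule together with the norm $\|\widehat{a}\|^2=a_r\cdot a_r+a_d\cdot a_d$, I would expand
\begin{align}
\|\widehat{f}_{k+1}\|^2=\|F_r(\widetilde{\Bar{\boldsymbol{\omega}}},0)\|^2+\|F_r(\widetilde{\Bar{\boldsymbol{v}}},0)+F_d(\widetilde{\Bar{\boldsymbol{\omega}}},0)\|^2.\nonumber
\end{align}
By multiplicativity of the quaternion norm, $\|ab\|=\|a\|\|b\|$, and by squaring out the second term, the right-hand side equals $\|\widetilde{\Bar{\boldsymbol{\omega}}}\|^2\big(\|F_r\|^2+\|F_d\|^2\big)+\|\widetilde{\Bar{\boldsymbol{v}}}\|^2\|F_r\|^2+2\big(F_r(\widetilde{\Bar{\boldsymbol{v}}},0)\big)\cdot\big(F_d(\widetilde{\Bar{\boldsymbol{\omega}}},0)\big)$. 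Bounding the last term by Cauchy--Schwarz for the (Euclidean) quaternion dot product and then using $2\|F_r\|\|F_d\|\le\|F_r\|^2+\|F_d\|^2=\|\widehat{f}_k\|^2$ and $\|F_r\|\le\|\widehat{f}_k\|$, I get the contraction estimate
\begin{align}
\|\widehat{f}_{k+1}\|^2\le\big(\|\widetilde{\Bar{\boldsymbol{\omega}}}\|^2+\|\widetilde{\Bar{\boldsymbol{\omega}}}\|\,\|\widetilde{\Bar{\boldsymbol{v}}}\|+\|\widetilde{\Bar{\boldsymbol{v}}}\|^2\big)\,\|\widehat{f}_k\|^2.\nonumber
\end{align}
For strictness I also need $\widehat{f}_k\neq\widehat{0}$ on $k\in[2,N]_d$: the real part $R_k$ of $\widetilde{\widehat{\omega}}^{\,k}$ obeys $R_{k+1}=(\widetilde{\Bar{\boldsymbol{\omega}}},0)R_k$, so $\|R_k\|=\|\widetilde{\Bar{\boldsymbol{\omega}}}\|^k$; since the real part $q_r$ of $\widehat{q}$ is a unit rotation quaternion, the real part $q_rR_k$ of $\widehat{f}_k$ has norm $\|\widetilde{\Bar{\boldsymbol{\omega}}}\|^k>0$ whenever $\Bar{\boldsymbol{\omega}}\neq\Bar{\boldsymbol{0}}$, which the hypothesis implicitly assumes (if $\Bar{\boldsymbol{\omega}}=\Bar{\boldsymbol{0}}$, then $\widetilde{\widehat{\omega}}^{\,k}=\widehat{0}$ for $k\ge2$ and the inequality degenerates to an equality).

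The hard part will be the final step: showing that the contraction coefficient $\|\widetilde{\Bar{\boldsymbol{\omega}}}\|^2+\|\widetilde{\Bar{\boldsymbol{\omega}}}\|\,\|\widetilde{\Bar{\boldsymbol{v}}}\|+\|\widetilde{\Bar{\boldsymbol{v}}}\|^2$ is strictly less than $1$. Membership $\boldsymbol{\omega}\in\mathcal{D}_{\widehat{\omega}}$, $\boldsymbol{v}\in\mathcal{D}_{\widehat{v}}$ only guarantees $\|\widetilde{\Bar{\boldsymbol{\omega}}}\|,\|\widetilde{\Bar{\boldsymbol{v}}}\|\in(0,1)$, which does not by itself force the coefficient below $1$. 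I would close the gap via Assumption 1: because the normalizing constant is $\texttt{max}(\{\omega_0,v_0\})$ with $\omega_0,v_0$ free to be taken as large as we wish (above the true velocity bounds), both $\|\widetilde{\Bar{\boldsymbol{\omega}}}\|$ and $\|\widetilde{\Bar{\boldsymbol{v}}}\|$ can be made small enough that the coefficient is $<1$, and then $\|\widehat{f}_{k+1}\|<\|\widehat{f}_k\|$ for every $k\in[2,N]_d$. An alternative, heavier route would substitute the closed-form odd/even expressions for $\widetilde{\widehat{\omega}}^{\,k}$ directly into $\|\widehat{f}_k\|^2$ and bound the resulting geometric sums, but the recursive argument keeps the algebra minimal and pinpoints exactly where the hypothesis enters.
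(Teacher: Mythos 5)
Your proof is correct in outline and takes a genuinely different --- and in fact sharper --- route than the paper's. The paper applies Lemma~\ref{lemma:inequality} with $\widehat{a}=\widehat{f}_k$ and $\widehat{b}=\tilde{\widehat{\omega}}$ to obtain $\|\widehat{f}_{k+1}\|^2\leq\tfrac{3}{2}\|\widehat{f}_k\|^2\|\tilde{\widehat{\omega}}\|^2$ and then asserts $\|\tilde{\widehat{\omega}}\|^2=|\tilde{\Bar{\boldsymbol{\omega}}}|^2+|\tilde{\Bar{\boldsymbol{v}}}|^2<2/3$. You instead expand the dual-quaternion product componentwise and use multiplicativity of the quaternion norm plus Cauchy--Schwarz, arriving at the contraction coefficient $\|\tilde{\Bar{\boldsymbol{\omega}}}\|^2+\|\tilde{\Bar{\boldsymbol{\omega}}}\|\,\|\tilde{\Bar{\boldsymbol{v}}}\|+\|\tilde{\Bar{\boldsymbol{v}}}\|^2$, which by AM--GM is never worse than the paper's $\tfrac{3}{2}(\|\tilde{\Bar{\boldsymbol{\omega}}}\|^2+\|\tilde{\Bar{\boldsymbol{v}}}\|^2)$. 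More importantly, you correctly flag the step both arguments must confront: membership in $\mathcal{D}_{\widehat{\omega}}$ and $\mathcal{D}_{\widehat{v}}$ only gives each normalized norm $<1$, hence a coefficient bounded by $3$ (respectively, the paper's quantity bounded by $2$), not by $1$ (respectively, $2/3$). The paper simply asserts the $<2/3$ bound without justification, so its own proof contains exactly the gap you identify; your resolution --- take the normalization constant $\texttt{max}(\{\omega_0,v_0\})$ large enough that both normalized velocities are sufficiently small --- is a legitimate way to close it, at the cost of reading an extra quantitative condition into Assumption~1 that the theorem statement does not make explicit. You also supply the non-degeneracy argument ($\widehat{f}_k\neq\widehat{0}$ when $\Bar{\boldsymbol{\omega}}\neq\Bar{\boldsymbol{0}}$) needed for the strict inequality, a point the paper's chain $\|\widehat{f}_{k+1}\|^2\leq\tfrac{3}{2}\|\widehat{f}_k\|^2\|\tilde{\widehat{\omega}}\|^2<\|\widehat{f}_k\|^2$ silently assumes.
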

\begin{proof}
Since $\widehat{f}_k=\widehat{q}\tilde{\widehat{\omega}}^{B(k)}$, using Lemma 2, we have
\begin{align}
   \|\widehat{f}_{k+1}\|^2\leq {3 / 2}\|\widehat{q}\tilde{\widehat{\omega}}^{B(k)}\|^2\|\tilde{\widehat{\omega}}^{B}\|^2 ={3 / 2}\|\widehat{f}_k\|^2\|\tilde{\widehat{\omega}}^{B}\|^2
\end{align}
Therefore, we have
\begin{align}
    \|\tilde{\widehat{\omega}}^{B}\|^2=(\Bar{{0}},|\tilde{\bar{\boldsymbol{\omega}}}|^2)+(\Bar{{0}},|\tilde{\Bar{\boldsymbol{v}}}^2|)=(\Bar{\boldsymbol{0}},|\tilde{\bar{\boldsymbol{\omega}}}|^2+|\tilde{\Bar{\boldsymbol{v}}}^2|)<2/3  \nonumber
\end{align}
Hence,
\begin{align}
   \|\widehat{f}_{k+1}\|^2\leq {3 / 2}\|\widehat{q}\tilde{\widehat{\omega}}^{B(k)}\|^2\|\tilde{\widehat{\omega}}^{B}\|^2 <\|\widehat{f}_k\|^2.\nonumber
\end{align}
Subsequently, $\|f_{k+1}\|<\|f_k\|$. Hence the theorem follows.
\end{proof}
\section{Lifted linear state space model\label{sec:lifted_linear_state_space_model}}
Based on the derived observables in Section \ref{sec:observale_functions}, the lifted state space (from Theorem \ref{thm:observables}) is as follows:
\begin{align}
    \boldsymbol{z}=[\widehat{q},\;\;\widehat{\omega},\;\widehat{f}_1,\dots \widehat{f}_N]^\mathrm{T}.
\end{align}
The lifted state space $\boldsymbol{z}$ is used to learn the lifted state and input matrices, $A_\text{lift}$ and $B_\text{lift}$ which is described as follows. First, from a random uniform distribution $[-1,1]$ ,a set of random control inputs are chosen. These inputs are then applied sequentially to the discrete-time nonlinear system \eqref{eqn:discrete_nonlinear_equation} with $\boldsymbol{x}_0$ as the initial state to get the subsequent states. Let the control input $\widehat{\boldsymbol{u}}_k$ be applied to take the rigid body from $\boldsymbol{x}_k$ to $\boldsymbol{x}_{k+1}$. Consequently, we construct the matrices $\boldsymbol{X},\boldsymbol{U},$ and $\boldsymbol{Y}$ as follows:
\begin{align}
   &\boldsymbol{X}:=[\boldsymbol{x}_0,\dots,\boldsymbol{x}_{N_t-1}],\quad 
   \boldsymbol{U}:=[\widehat{\boldsymbol{u}}_0,\dots,\widehat{\boldsymbol{u}}_{N_t-1}],\nonumber\\ 
   &\boldsymbol{Y}:=[\boldsymbol{x}_1,\dots,\boldsymbol{x}_{N_t}].\nonumber
\end{align}
 where $N_{t}+1$ is the total number of data points collected. 
The matrix $\boldsymbol{Y}$ can be expressed as $  \boldsymbol{Y}$$=$$\boldsymbol{h}(\boldsymbol{X},\boldsymbol{U})$. Now given these matrices, $A_\text{lift}$ and $B_\text{lift}$, can be computed via the solution to the following optimization problem
\begin{align}
 \min _{A_\text{lift}, B_\text{lift}}\quad\left\|\boldsymbol{Y}_{\mathrm{lift}}-A_\text{lift} \boldsymbol{X}_{\mathrm{lift}}-B_\text{lift} \boldsymbol{U}\right\|_{F}  ,
 \label{eqn:least_squares_optimization}
\end{align}
where $\boldsymbol{X}_{\mathrm{lift}}=\left[\boldsymbol{z}(\boldsymbol{x}_{0}), \ldots, \boldsymbol{z}(\boldsymbol{x}_{N_t-1})\right]$ and $\boldsymbol{Y}_{\mathrm{lift}}=\left[\boldsymbol{z}(\boldsymbol{x}_{1}), \ldots, \boldsymbol{z}(\boldsymbol{x}_{N_t})\right]$.
The analytical solution to \eqref{eqn:least_squares_optimization} is given by $[A_\text{lift}, B_\text{lift}]=\boldsymbol{Y}_{\text {lift }}\left[\boldsymbol{X}_{\text {lift}}, \boldsymbol{U}\right]^{\dagger}$
where $(.)^{\dagger}$ denotes the Moore-Penrose pseudoinverse operator. Therefore, the lifted linear state space model is given as
\begin{align}
    \boldsymbol{z}_{k+1}=A_\text{lift}\boldsymbol{z}_k+B_\text{lift}\boldsymbol{u}_k
    \label{eqn:linear_lifted_state_space_model}
\end{align}
\begin{figure}[ht]
 \captionsetup[subfigure]{justification=centering}
 \centering
 \begin{subfigure}{0.23\textwidth}
{\includegraphics[scale=0.22]{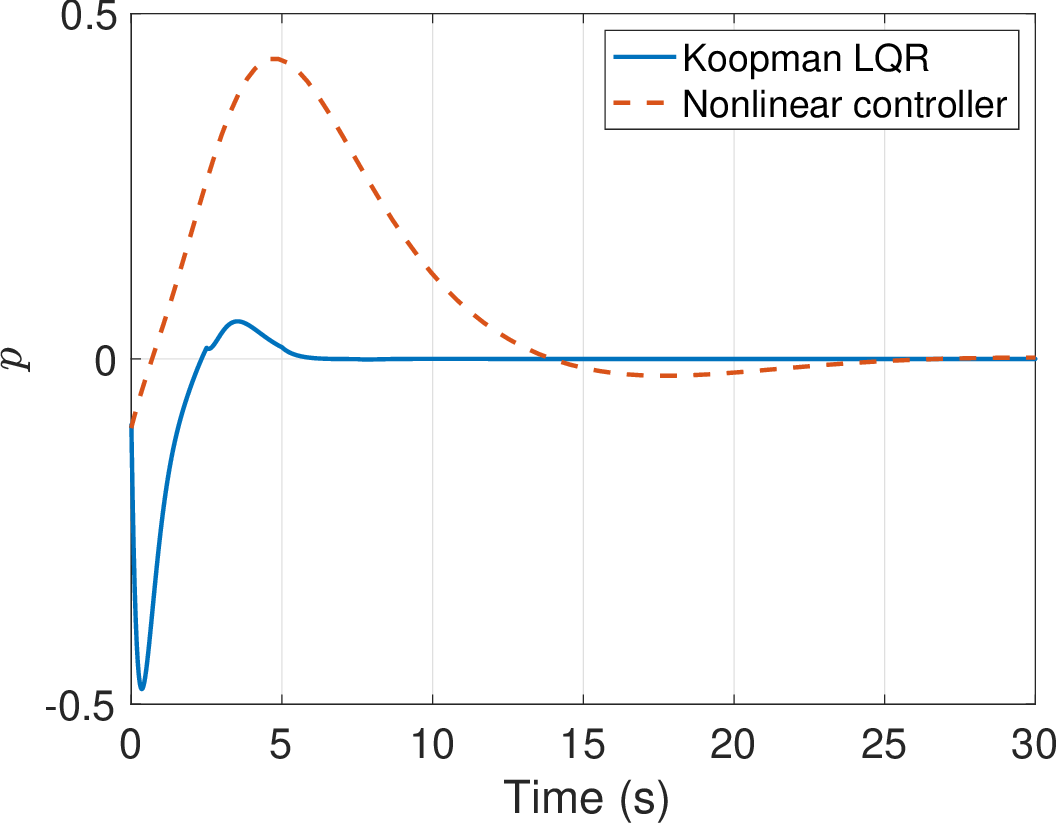}}
\caption{$\omega_1$}
\label{fig:}
 \end{subfigure}
\label{fig:}
 \begin{subfigure}{0.23\textwidth}
{\includegraphics[scale=0.22]{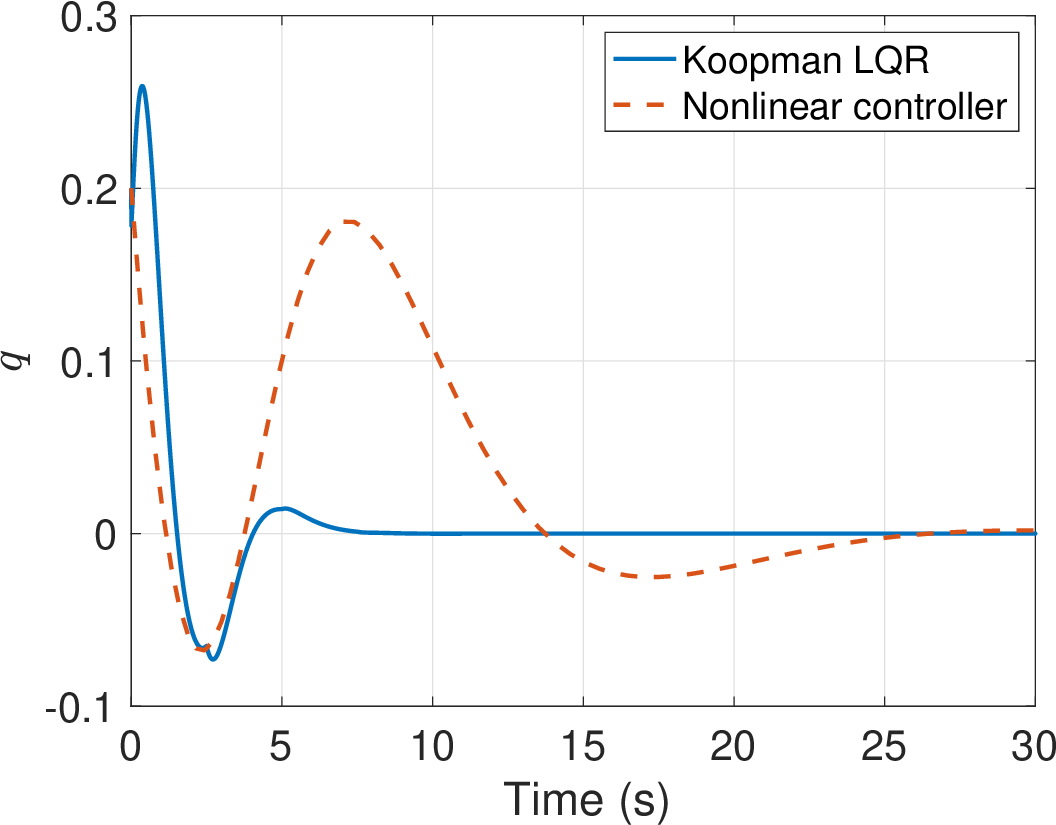}}
 \caption{$\omega_2$}
\label{fig:approximation_e_v}
 \end{subfigure}
\label{fig:}
 \begin{subfigure}{0.23\textwidth}
{\includegraphics[scale=0.22]{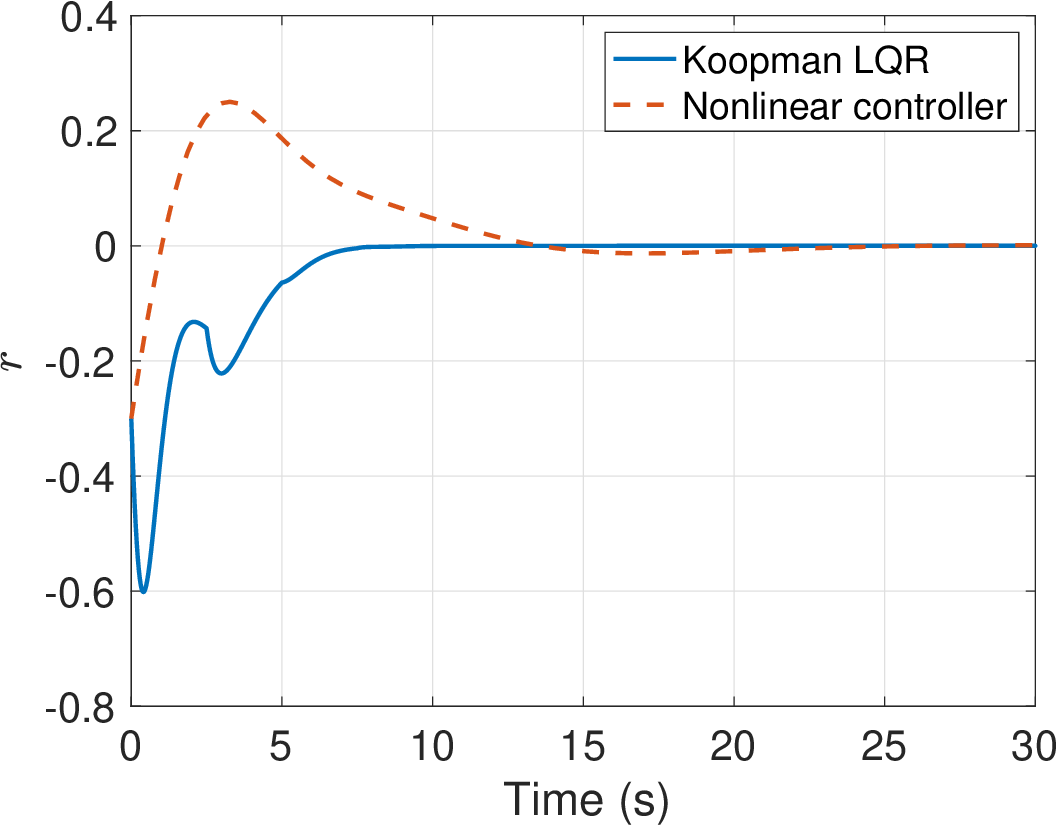}}
\caption{$\omega_3$}
\label{fig:}
\end{subfigure}
\begin{subfigure}{0.23\textwidth}
{\includegraphics[scale=0.22]{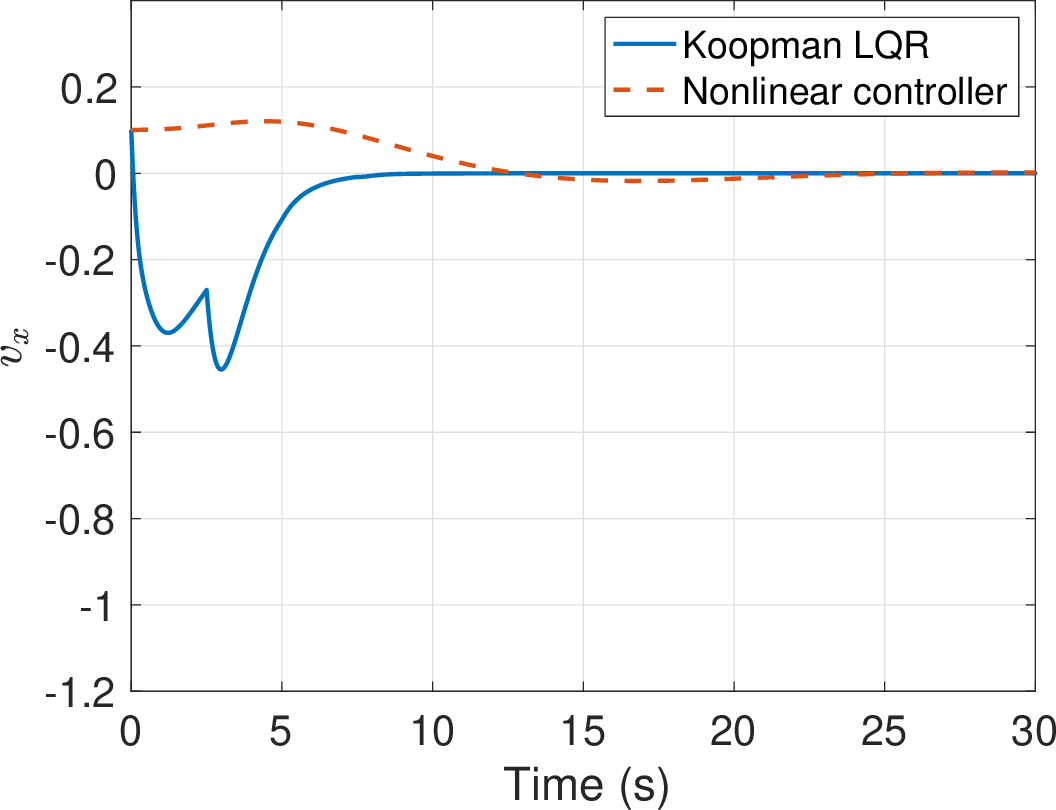}}
\caption{$v_1$}
\label{fig:}
 \end{subfigure}
\label{fig:}
 \begin{subfigure}{0.23\textwidth}
{\includegraphics[scale=0.22]{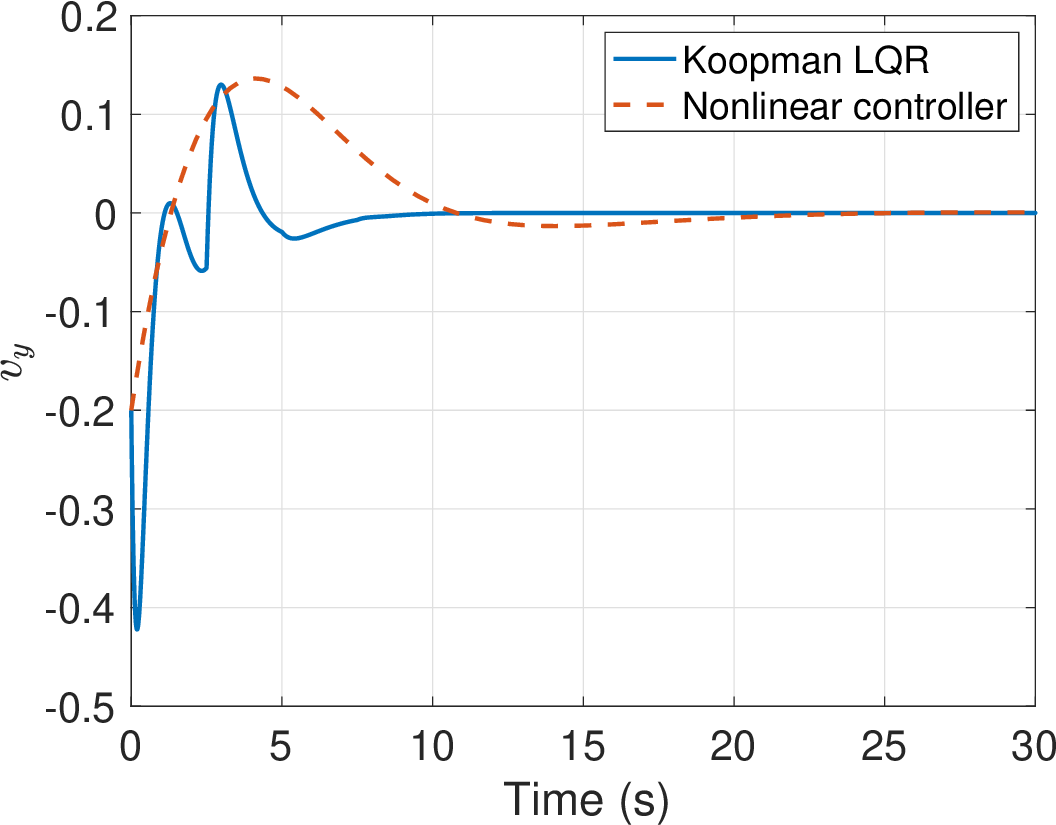}}
 \caption{$v_2$}
\label{fig:approximation_e_v}
 \end{subfigure}
\label{fig:}
 \begin{subfigure}{0.23\textwidth}
{\includegraphics[scale=0.22]{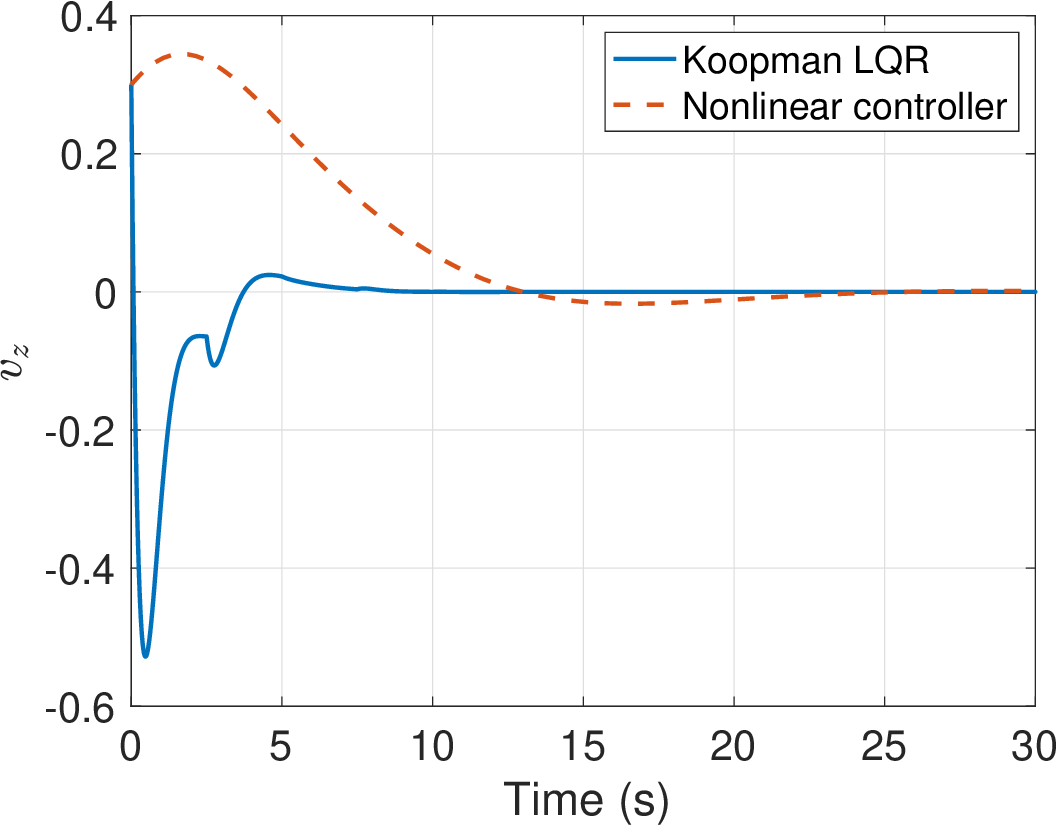}}
\caption{$v_3$}
\label{fig:}
\end{subfigure}
 \caption{Angular and linear velocities v/s time}
\label{fig:angular_velocity}
\end{figure}

\begin{figure}[]
 \captionsetup[subfigure]{justification=centering}
 \centering
 \begin{subfigure}{0.23\textwidth}
{\includegraphics[scale=0.22]{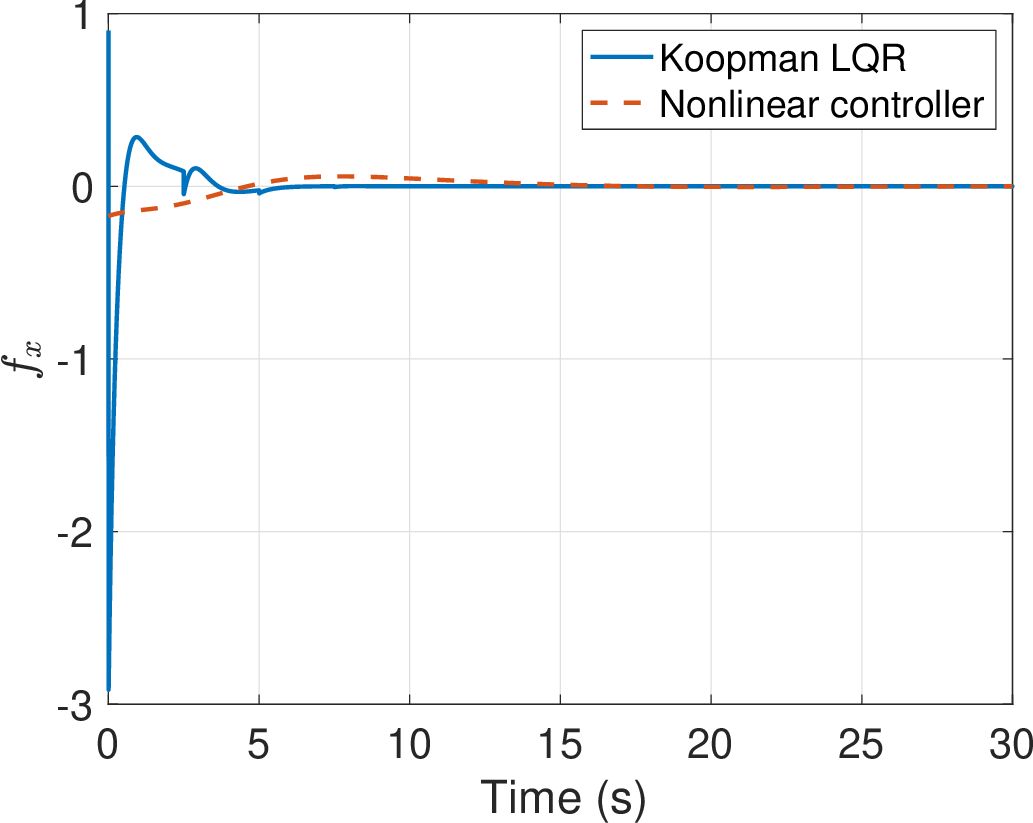}}
\caption{$f_x$ v/s time}
\label{fig:}
 \end{subfigure}
\label{fig:}
 \begin{subfigure}{0.23\textwidth}
{\includegraphics[scale=0.22]{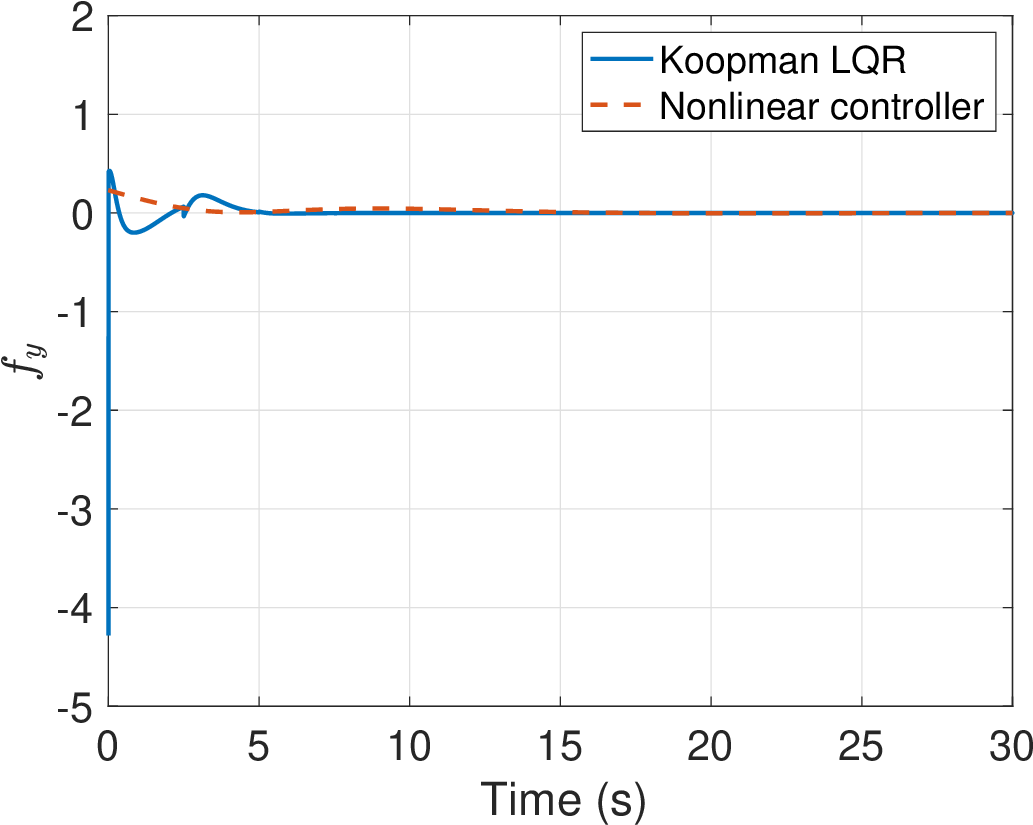}}
 \caption{$f_y$ v/s time}
\label{fig:approximation_e_v}
 \end{subfigure}
\label{fig:}
 \begin{subfigure}{0.23\textwidth}
{\includegraphics[scale=0.22]{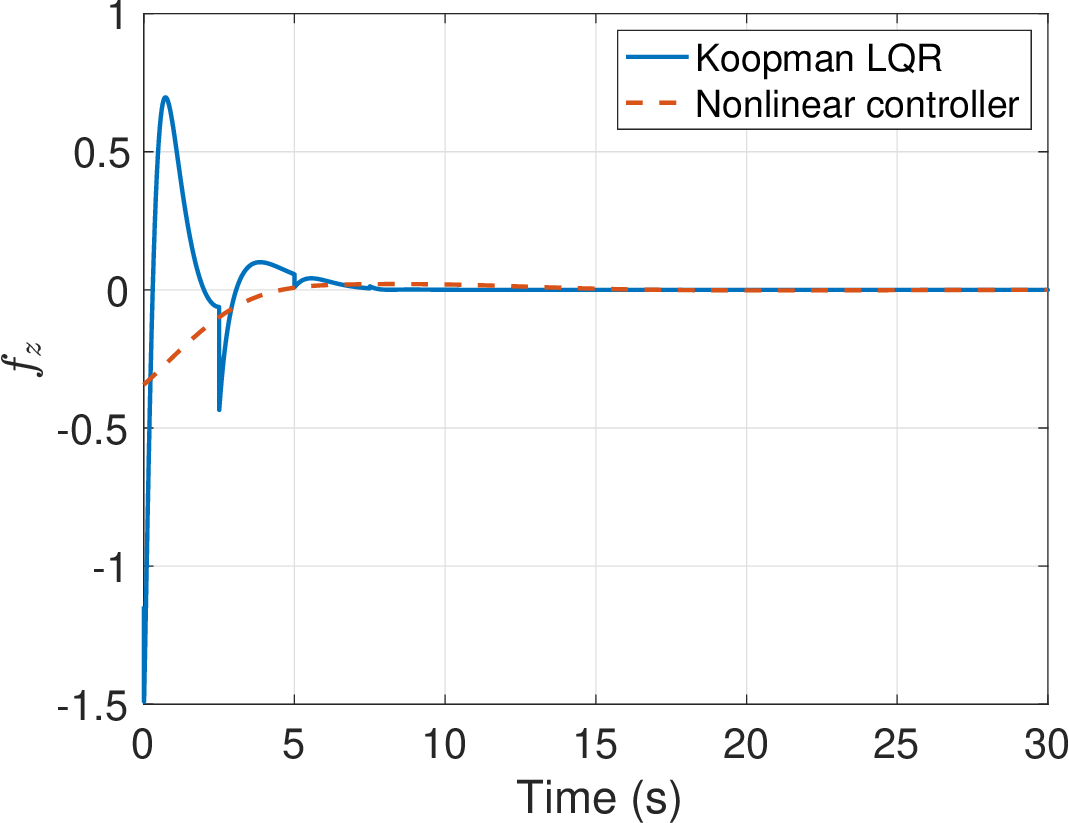}}
\caption{$f_z$ v/s time}
\label{fig:}
\end{subfigure}
\begin{subfigure}{0.23\textwidth}
{\includegraphics[scale=0.22]{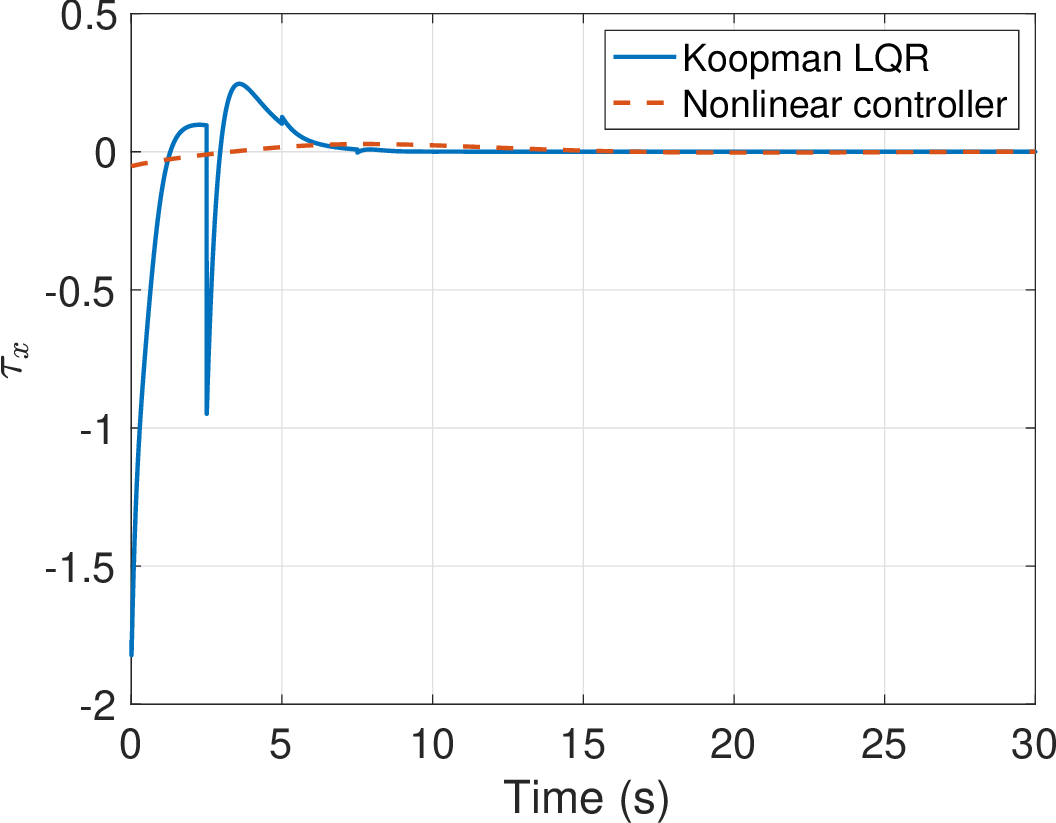}}
\caption{$\tau_x$ v/s time}
\label{fig:}
 \end{subfigure}
\label{fig:}
 \begin{subfigure}{0.23\textwidth}
{\includegraphics[scale=0.22]{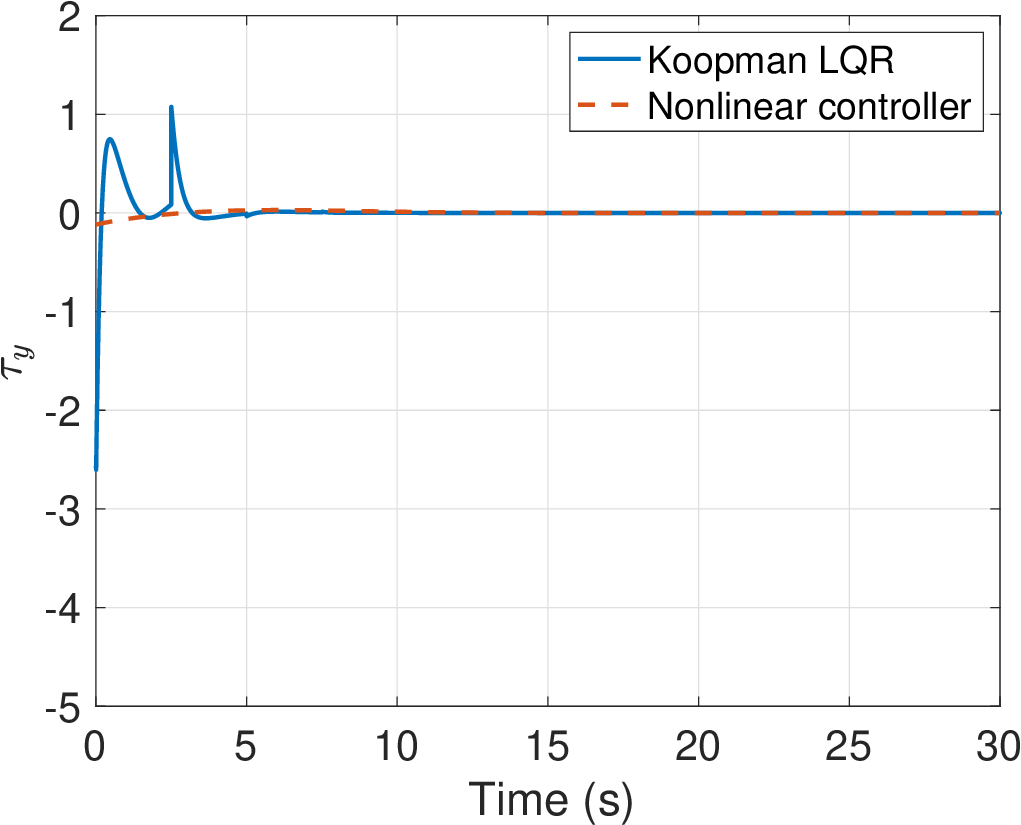}}
 \caption{$\tau_y$ v/s time}
\label{fig:approximation_e_v}
 \end{subfigure}
\label{fig:}
 \begin{subfigure}{0.23\textwidth}
{\includegraphics[scale=0.22]{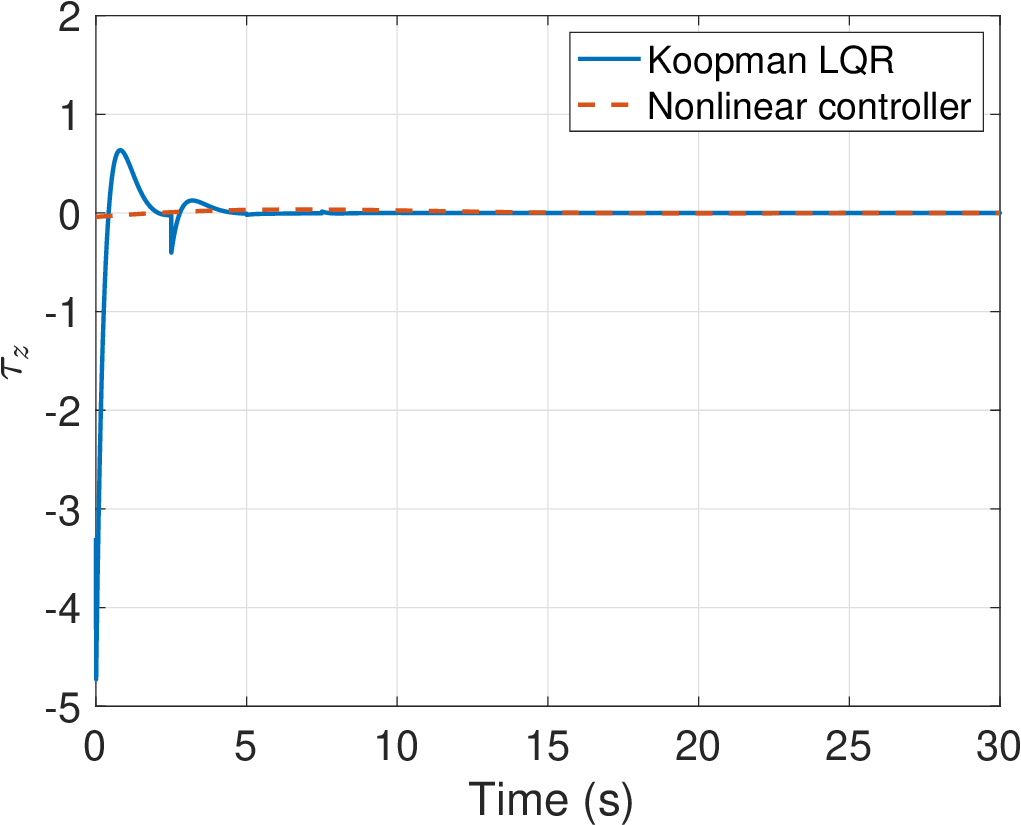}}
\caption{$\tau_z$ v/s time}
\label{fig:}
\end{subfigure}
 \caption{Force $F^B=[f_x\;\;f_y\;\;f_z]^\mathrm{T}$ and torque $\tau^B=[\tau_x\;\;\tau_y\;\;\tau_z]^\mathrm{T}$ versus time $t$}
\label{fig:}
\end{figure}

\section{Linear Control Design using LQR\label{sec:control_design}}
In this section, we design a LQR controller for the Koopman based on lifted state space model of the rigid body dynamics. Consider the lifted linear dynamics given by \eqref{eqn:linear_lifted_state_space_model}. The control design is based on the solution to the following infinite horizon LQR problem:
\begin{subequations}
\begin{align}
    &\min _{{\boldsymbol{z}}_k, \widehat{\boldsymbol{u}}_k}\quad\sum_{k=1}^{\infty}\;\; \boldsymbol{z}_k^\mathrm{T}Q_z\boldsymbol{z}_k+\boldsymbol{\widehat{u}}_k^\mathrm{T} R_z\widehat{\boldsymbol{u}}_k\\
    &\text{s.t.}\quad\boldsymbol{z}_{k+1}=A_\text{lift}\boldsymbol{z}_k+B_\text{lift}\widehat{\boldsymbol{u}}_k,
\end{align}
 \label{eqn:control_min_objective}
\end{subequations}
where $Q_z=Q_z^\mathrm{T}\succcurlyeq 0$ and $R_z=R_z^\mathrm{T}\succ 0$. During simulations, it was observed that the pairs $(A_{\text{lift}},Q_z^{\frac{1}{2}})$ and $(A_{\text{lift}},B_{\text{lift}})$ are observable and controllable respectively. The feedback control law that solves \eqref{eqn:control_min_objective} is given by
\begin{align}
    \widehat{\boldsymbol{u}}_k=-K\boldsymbol{z}_k,
\end{align}
where $K=\left(R_z+B_\text{lift}^{T} P B_\text{lift}\right)^{-1} B_\text{lift}^{T} P A_\text{lift}$ and $P$ satisfies the following discrete-time algebraic Riccati equation:
\begin{align}
\small
    P=&A_{\text{lift}}^{T} P A_{\text{lift}}-A_{\text{lift}}^{T} P B_{\text{lift}}\left(R_z+B^{T} P B_{\text{lift}}\right)^{-1} B_{\text{lift}}^{T} P A_{\text{lift}}\nonumber\\
    &+Q_z\nonumber
\end{align}

The control input $\boldsymbol{u}_k$ is then given by
\begin{align}
\boldsymbol{u}_k=M(\widehat{\boldsymbol{u}}_k+\widehat{\omega}\times(M\star((\dot{\widehat{\omega}})^s)))^s.
\end{align}
The control sequence $\{\boldsymbol{u}_k\}$ is then applied to the discrete-time nonlinear system \eqref{eqn:discrete_nonlinear_equation}. Algorithm \ref{alg:lqr_control} summarizes the control design for the dual quaternion based rigid body motion. The functions $\texttt{getKoopman}(\boldsymbol{z})$ takes in the initial state as $\boldsymbol{z}$ and returns $A_\text{lift}$ and $B_\text{lift}$ matrices as explained in Section \ref{sec:lifted_linear_state_space_model}. The function $\texttt{InfiniteLQR}(A_\text{lift},B_\text{lift},Q_z,R_z)$ computes the feedback control gain $K$ which minimizes \eqref{eqn:control_min_objective}. Lastly, the function $\texttt{rem}(a,b)$ returns the remainder when $a$ is divided by $b$.
\begin{algorithm}[H]
 \caption{Data-driven Koopman based LQR control}
 \small
\hspace*{\algorithmicindent} \textbf{Input:} $Q_z$, $R_z$, $\boldsymbol{x}_{0}$, $N_{\text{total}}$ and $N_t$ \\
 \hspace*{\algorithmicindent} \textbf{Output:} $\boldsymbol{z}$
\begin{algorithmic}[1]
 \State $\boldsymbol{z}_1=\boldsymbol{z}(\boldsymbol{x}_0)$
 \State $[A_\text{lift},B_\text{lift}]\gets \texttt{getKoopman}(\boldsymbol{z}_1)$
 \State $K\gets \texttt{InfiniteLQR}(A_\text{lift},B_\text{lift},Q_z,R_z)$
\For {$k=1\;\text{to}\;N_{\text{total}}$}
 \State $\widehat{\boldsymbol{u}}_k\gets -K\boldsymbol{z}$
 \State $\boldsymbol{u}_k\gets M(\widehat{\boldsymbol{u}}_k+\widehat{\omega}\times(M\star((\dot{\widehat{\omega}})^s)))^s$
 \State ${\boldsymbol{z}}_{k+1}\gets h(\boldsymbol{z}_k,{\boldsymbol{u}}_k)$
 \If {$\texttt{rem}(k,N_t)=0$}
 \State $[A_\text{lift},B_\text{lift}]\gets \texttt{getKoopman}(\boldsymbol{z}_{k+1})$
\State $K\gets \texttt{InfiniteLQR}(A_\text{lift},B_\text{lift},Q_z,R_z)$
 \EndIf
\EndFor
 \end{algorithmic}
\label{alg:lqr_control}
\end{algorithm}

A general control design structure is given in Fig. \ref{fig:feedback_diagram}.
\section{Numerical simulations\label{sec:numeical_simulations}}
Simulation studies have been carried out using MATLAB R2020b on an Intel Core i7 2.2GHz processor. The parameters for the rigid body are the same as in \cite{nuno_filipe2013rigid_nuno_rigid}. A rigid body with the moment of inertia
$\bar{I}^{B}=\left[\begin{smallmatrix}
1 & 0.1 & 0.15 \\
0.1 & 0.63 & 0.05 \\
0.15 & 0.05 & 0.85
\end{smallmatrix}\right] \quad \mathrm{Kg} \cdot \mathrm{m}^{2}$
and mass $m=1$kg is chosen. The rigid body is positioned at initial position $[x,y,z]^\mathrm{T}=[2,2,1]^\mathrm{T}$m with attitude $q=[q_1,q_2,q_3,q_4]^\mathrm{T}=[0.4618, 0.1917, 0.7999, 0.3320]^\mathrm{T}$. The initial linear and angular velocity in the body frame are equal to $\bar{v}^B=[v_x,v_y,v_z]^\mathrm{T}=[0.1, -0.2,0.3]^\mathrm{T}$m and $\bar{\omega}^B=[p,q,r]^\mathrm{T}=[-0.1,0.2,0.3]^\mathrm{T}$. The task is to steer the rigid body from the given initial state to the origin in the inertial frame. For LQR control design purposes, we take $Q_z=\texttt{blkdiag}(5\mathbf{I}_{16},\mathbf{0}_{N-16})$
and $R_z=\mathbf{I}_6$.
 The values of $N_t$ , $N_\text{total}$, and the sampling time $T$ are chosen as  $500s$, $6000s=30s/T$, and $0.05s$ respectively. Consequently, the feedback control inputs $\widehat{\boldsymbol{u}}$ computed from Algorithm \ref{alg:lqr_control} are added to the nonlinear system. Fig. \ref{fig:angular_velocity} shows the evolution of angular and linear velocities with time for $N_1=30s$
The lifted state space for the LQR based control design is chosen as follows:
\begin{align}
    \boldsymbol{z}=[\widehat{q}\;\;\widehat{q}\widehat{\omega}\;\;\widehat{q}\widehat{\omega}^{2}\;\;\widehat{q}\widehat{\omega}^{3}\;\;\widehat{q}\widehat{\omega}^{4}\;\;\widehat{q}\widehat{\omega}^{5}]^\mathrm{T}.\nonumber
\end{align}
\begin{table}
    \centering
\begin{tabular}{ | m{4em} | m{3cm}| m{2.6cm} | } 
  \hline
  LQR cost & Derived observables & Gaussian RBF's \\
  \hline
  $N=0$ & $7.9449\times 10^3$ & $7.9449\times 10^3$  \\ 
  \hline
  $N=3$ & $7.5697\times 10^3$  & $2.1485\times 10^5$\\ 
  \hline
  $N=5$ & $7.0287\times10^3$ & $3.2961\times 10^5$ \\
  \hline
\end{tabular}
\caption{LQR cost versus the number of observables $N$}
\label{table:cost}
\end{table}
As seen from Table \ref{table:cost}, the LQR cost decreases as the dimension of the lifted space increases. This is mainly because as $N$ increases, the lifted linear dynamics is able to better approximate the nonlinear dynamics. This is in agreement with our analysis. in addition, it is worth mentioning that, as seen from Table \ref{table:cost} using, other popular observables like the Gaussian radial basis functions (RBFs) might not always lead to decrease in the LQR cost as $N$ increases.

\section{Conclusions\label{sec:conclusions}}
In this paper, we derived a set of Koopman based observables for the rigid body dynamics based on the dual quaternion  representation which allowed us to describe the rigid body motion in terms of a linear state space model of higher dimension than the original nonlinear system (lifted state space model). Subsequently, we utilized the lifted linear model to design an LQR controller which turned out to perform in par with benchmark nonlinear controllers for stabilization of rigid body dynamics. The success of our approach can be justified theoretically by the fact that the proposed (infinite) set of Koopman based observables can form a sequence of functions that converges pointwise to the zero function. The latter statement practically means that a sufficiently large, yet finite, subset of the latter set of observables can be used to construct a linear state space model (approximation of the original nonlinear system) that describes the rigid body motion accurately enough for control design purposes. In our future work, we plan to utilize the proposed Koopman operator framework to design more sophisticated controllers (such as covariance steering algorithms) for uncertain rigid body dynamics.

\section{Appendix}
In this section, we present the discretization technique which is based on the fourth order Runga Kutta method. We discretize the continuous-time rigid body dynamics based on the dual quaternion representation \eqref{eqn:kinematics_dual} and \eqref{eqn:modified_control_input} using the classical fourth order Runga Kutta method as follows:
\begin{align}
\boldsymbol{x}_{k+1}&=\boldsymbol{x}_k+\frac{T}{6}\left(\boldsymbol{k}_{\mid 1}+2 \boldsymbol{k}_{\mid 2}+2 \boldsymbol{k}_{\mid 3}+\boldsymbol{k}_{\mid 4}\right),
\label{eqn:discrete_runga}
\end{align}
where $k\in[0, N-1]_d$, $T>0$ is the sampling period and $[a,b]_d=[a,b]\cap\mathbb{N}$. $\boldsymbol{k}_{\mid 1}$, $\boldsymbol{k}_{\mid 2}$, $\boldsymbol{k}_{\mid 3}$, and $\boldsymbol{k}_{\mid 4}$ are given as follows: 
\begin{subequations}
\begin{align}
& \boldsymbol{k}_{\mid 1}=f(\boldsymbol{x}_k,\boldsymbol{u}_k),\quad \boldsymbol{k}_{\mid 2}=f\left(\boldsymbol{x}_k+\frac{T}{2} \boldsymbol{k}_{\mid 1},\boldsymbol{u}_{k}\right),\nonumber\\ &\boldsymbol{k}_{\mid 3}=f\left(\boldsymbol{x}_k+\frac{T}{2} \boldsymbol{k}_{\mid 2},\boldsymbol{u}_k\right),\; \boldsymbol{k}_{\mid 4}=f\left(\boldsymbol{x}_k+T \boldsymbol{k}_{\mid 3},\boldsymbol{u}_k\right),\nonumber
\end{align}
\end{subequations}
 From \eqref{eqn:discrete_runga}, the discrete nonlinear rigid body dynamics can be written in compact form as follows
\begin{align}
    \boldsymbol{x}_{k+1}={h}(\boldsymbol{x}_k,\boldsymbol{u}_k).
    \label{eqn:discrete_nonlinear_equation}
\end{align}





 \bibliography{main.bib}

\end{document}